\newcommand\techrep[1]{}
\newcommand\nottechrep[1]{#1}
\newcommand{\impl}{\textsc{mPy}}
\newcommand{\implp}{\widetilde{\impl}}
\renewcommand\t[1]{\texttt{#1}}
\newcommand\bnfas{\;:=\;}
\newcommand\bnfalt{\;|\;}
\newcommand\var{v}
\newcommand\fvar{f}
\newcommand\sem[1]{[\![#1]\!]}
\newcommand\itos[1]{\Phi(#1)}
\newcommand\prog{p}
\newcommand\aexpr{a}
\newcommand\bexpr{b}
\newcommand\sexpr{s}
\newcommand\intN{n}
\newcommand\op{op}
\newcommand\aop{op_\aexpr}
\newcommand\bop{op_\bexpr}
\newcommand\cop{op_c}
\newcommand\xop{op_x}
\newcommand\fs{\tilde{f}}
\newcommand\sprog{\tilde{p}}
\newcommand\saexpr{\tilde{\aexpr}}
\newcommand\sbexpr{\tilde{\bexpr}}
\newcommand\ssexpr{\tilde{\sexpr}}
\newcommand\sop{\widetilde{\op}}
\newcommand\saop{\widetilde{op}_a}
\newcommand\sxop{\widetilde{op}_x}
\newcommand\sbop{\widetilde{op}_b}
\newcommand\scop{\widetilde{op}_c}
\newcommand\svar{\tilde{\var}}
\newcommand\where{\;|\;}
\newcommand\choosevar{?}
\newcommand\csharp{\textsc{C\#}\xspace}
\newcommand {\Sk}{\textsc{Sketch}}
\newcommand{\fm}{\mathcal{E}}
\newcommand{\emod}{\mathcal{E}} % error model
\newcommand{\crule}{\mathcal{C}} % correction rule
\newcommand{\initF}{\textsc{InitR}\xspace}
\newcommand{\condF}{\textsc{CompR}\xspace}
\newcommand{\indF}{\textsc{IndR}\xspace}
\newcommand{\retF}{\textsc{RetR}\xspace}
\newcommand{\rangeF}{\textsc{RanR}\xspace}
\newcommand{\dash}{'}
\newcommand{\transExpr}{\mathcal{T}\xspace}
\newcommand{\matchE}{\t{Match}\xspace}
\newcommand{\sif}{\t{if}~}
\newcommand{\selse}{\t{else}~}
\newcommand{\wexpr}{w\xspace}
\definecolor{bluekeywords}{rgb}{0.13,0.13,1}
\definecolor{greencomments}{rgb}{0,0.5,0}
\definecolor{redstrings}{rgb}{0.9,0,0}
\newcommand{\C}[1]{\lstinline!#1!}
\newcommand{\seclabel}[1]{\label{sec:#1}}
\newcommand{\longsecref}[1]{Section~\ref{sec:#1}}
\newcommand{\tablabel}[1]{\label{tab:#1}}
\newcommand{\tabref}[1]{Table~\ref{tab:#1}}
\newcommand{\figlabel}[1]{\label{fig:#1}}
\newcommand{\longfigref}[1]{Figure~\ref{fig:#1}}
\newcommand{\secref}[1]{\longsecref{#1}}
\newcommand{\figref}[1]{\longfigref{#1}}
\newcommand{\pexfun}{\textsc{Pex4Fun}\xspace}
\newcommand{\eml}{\textsc{Eml}\xspace}
\newcommand{\defchoice}[1]{\framebox{$#1$}}
\newcommand{\tr}{\mathcal{T}}
\newcommand\choicek{\t{choice}_\t{k}}
\newcommand\inpstate{\sigma}
\newcommand\cnstr{\phi}
\newcommand\cnstrset{\Phi}
\newcommand\cnstrprev{\phi_p}
\begin{document}

%% \conferenceinfo{}{date, City.} 
%% \copyrightyear{2013} 
%% \copyrightdata{[to be supplied]} 

\titlebanner{banner above paper title}        % These are ignored unless
\preprintfooter{short description of paper}   % 'preprint' option specified.

\title{Automated Feedback Generation for Introductory Programming Assignments}

\authorinfo{Rishabh Singh}
           {MIT CSAIL, Cambridge, MA}
           {rishabh@csail.mit.edu}
\authorinfo{Sumit Gulwani}
           {Microsoft Research, Redmond, WA}
           {sumitg@microsoft.com}
\authorinfo{Armando Solar-Lezama}
           {MIT CSAIL, Cambridge, MA}
           {asolar@csail.mit.edu}

\maketitle

\newtheorem{theorem}{Theorem}
\newtheorem{definition}{Definition}
\newtheorem{example}{Example}

\begin{abstract}

We present a new method for automatically providing feedback for
introductory programming problems. In order to use this method, we
need a reference implementation of the assignment, and an error model
consisting of potential corrections to errors that students might
make. Using this information, the system automatically derives minimal
corrections to student's incorrect solutions, providing them with a
quantifiable measure of exactly how incorrect a given solution was, as
well as feedback about what they did wrong.

We introduce a simple language for describing error models in terms of
correction rules, and formally define a rule-directed translation
strategy that reduces the problem of finding minimal corrections in an
incorrect program to the problem of synthesizing a correct program
from a sketch. We have evaluated our system on thousands of real
student attempts obtained from 6.00 and 6.00x. Our results show that
relatively simple error models can correct on average $65\%$ of
all incorrect submissions.

\end{abstract}

%\category{CR-number}{subcategory}{third-level}

%\terms
%Algorithms, Synthesis

%\keywords
%Program Synthesis

%% \appendix

%%\appendix
%\input{appendix}

% We recommend abbrvnat bibliography style.

\section{Introduction}

There has been a lot of interest recently in making quality education
more accessible to students worldwide using information
technology. Several education initiatives such as EdX, Coursera, and
Udacity are teaming up with experts to provide online courses on
various college-level subjects ranging from computer science to
physics and psychology. These courses, also called massive open online
courses (MOOC), are typically taken by several thousands of students
worldwide, and presents many interesting challenges that are not
present in a traditional classroom setting consisting of only a few
hundred students. One such challenge in these courses is to provide
personalized feedback on practice exercises and assignments to a large
number of students. We consider the problem of providing automated
feedback for online introductory programming courses in this paper. We
envision this technology to be useful in a traditional classroom
setting as well.

The two most commonly used methods today for providing feedback on
programming problems are: (i) test-cases based feedback and (ii)
\emph{peer-feedback}~\cite{peerfeedback}. In automated test-cases
based feedback, the student program is run on a set of test cases and
the failing test cases are reported back as feedback to the
student. This is also how the 6.00x course (Introduction to Computer
Science and Programming) offered by MITx currently provides feedback
for the python programming exercises. The provided feedback of failing
test cases is however not ideal, especially for beginner programmers,
as they find it difficult to map the failing test cases to errors in
their code. We found a lot of students posting their submissions on
the discussion board seeking help from instructors and other students
after struggling for several hours to correct the mistakes
themselves. In fact, for the classroom version of the Introduction to
Programming course (6.00) taught at MIT, the teaching assistants are
required to manually go through each student submission and provide
qualitative feedback describing exactly what is wrong with the
submission and how to correct it. This manual feedback by teaching
assistants is simply prohibitive for the number of students in the
online class setting.

The second approach of peer-feedback is being suggested as a potential
solution to this problem. In this approach, the peer students who are
also taking the same course answer the posts on the discussion boards
-- this way the problem of providing feedback is distributed across
several peer students. Unfortunately, providing quality feedback is a
big challenge for experienced teaching assistants, and therefore it
presents an even bigger challenge for the peer students who are also
beginning to learn programming themselves. From the 6.00x discussion
boards, we observed that in many instances students had to wait
several hours (or days) to get any feedback, and in many cases the
feedback provided was either too general, incomplete or even wrong in
a few cases.

In this paper, we present an automated technique to provide feedback
for introductory programming assignments. The approach leverages
program synthesis technology to automatically determine minimal fixes
to the student's solution that will make it match the behavior of a
reference solution written by the instructor. This technology makes it
possible to provide students with precise feedback about what they did
wrong and how to correct them. The problem of providing automatic
feedback appears to be related to the problem of automated bug fixing,
but it differs from it in following two significant respects:
\begin{itemize}
\item \textbf{The complete specification is known.} An important
  challenge in automatic debugging is that there is no way to know
  whether a fix is addressing the root cause of a problem, or simply
  masking it and potentially introducing new errors. Usually the best
  one can do is check a candidate fix against a test suite or a
  partial specification~\cite{weimer}. While providing feedback on the
  other hand, the solution to the problem is known, and it is safe to
  assume that the instructor already wrote a correct reference
  implementation for the problem.

\item \textbf{Errors are predictable.} In a homework assignment,
  everyone is solving the same problem after having attended the same
  lectures, so errors tend to follow predictable patterns. This makes
  it possible to use a \emph{model-based} feedback approach, where the
  potential fixes are guided by a model of the kinds of errors
  students typically make for a given problem.

\end{itemize}
These simplifying assumptions, however, introduce their own set of
challenges. For example, since the complete specification is known,
the tool now needs to reason about the equivalence of the student
solution with the reference implementation. Also, in order to take
advantage of the predictability of errors, the tool needs to be
parameterized with models that describe the classes of errors. And
finally, these programs can be expected to have higher density of
errors than production code, so techniques like the one suggested by
\cite{rupakpldi11}, which attempts to correct bugs one path at a time
will not work for many of these problems that require coordinated
fixes in multiple places.

Our automated feedback generation technique handles all of these
challenges. The tool can reason about the semantic equivalence of
student programs and reference implementations written in a fairly
large subset of python, so the instructor does not have to learn a new
formalism to write specifications. The tool also provides an
\emph{error model} language that can be used to write an error model:
a very high level description of potential corrections to errors that
students might make in the solution. When the system encounters an
incorrect solution by a student, it symbolically explores the space of
all possible combinations of corrections allowed by the error model
and finds a correct solution requiring a \emph{minimal} set of
corrections.

We have evaluated our approach on thousands of student solutions on
programming problems obtained from the 6.00x submissions and
discussion boards, and from the 6.00 class submissions. These problems
constitute a major portion of first month of assignment problems. Our
tool can successfully provide feedback on over $65\%$ of the incorrect
solutions.

This paper makes the following key contributions:
\begin{itemize}
\item We show that the problem of providing automated feedback for
  introductory programming assignments can be framed as a synthesis
  problem. Our reduction uses a constraint-based mechanism to model
  python's dynamic typing and supports complex python constructs such
  as closures, higher-order functions and list comprehensions.
\item We define a high-level language \eml that can be used to provide
  correction rules to be used for providing feedback. We also show
  that a small set of such rules is sufficient to correct thousands of
  incorrect solutions written by students.
\item We report the successful evaluation of our technique on
  thousands of real student attempts obtained from 6.00 and 6.00x
  classes, as well as from \pexfun website. Our tool can provide
  feedback on 65\% of all submitted solutions that are incorrect in
  about $10$ seconds on average.
\end{itemize}

\section{Overview of the approach}
In order to illustrate the key ideas behind our approach, consider the
problem of computing the derivative of a polynomial whose coefficients
are represented as a list of integers. This problem is taken from week 3
problem set of 6.00x (PS3: Derivatives). Given the input list \t{poly},
the problem asks students to write the function \t{computeDeriv} that
computes a list \t{poly'} such that
{\small
$$
\t{poly'} = \left\{ \begin{array}{cl}
\{\t{i} \times \t{poly[i]} \where 0 < \t{i} < \t{len(poly)} \} & \mbox{if \t{len(poly)} > 1} \\
\lbrack 0.0 \rbrack & \mbox{if \t{len(poly)} = 1} 
\end{array} \right.
$$
}
For example, if the input list \t{poly} is $[2,-3,1,4]$ (denoting
$f(x) = 4x^3+x^2-3x+2$), the \t{computeDeriv} function should return
$[-3,2,12]$ (denoting the derivative $f'(x) = 12x^2+2x-3$). The
reference implementation for the \t{computeDeriv} function is shown in
\figref{compderivrefimpl}. This problem teaches concepts of
conditionals and iteration over lists. For this problem, students
struggled with many low-level python semantics issues such as the list
indexing and iteration bounds. In addition, they also struggled with
conceptual issues such as missing the corner case of handling lists
consisting of single element (denoting constant function).

\begin{figure}[!htpb]
\begin{lstlisting}[language=python]
def computeDeriv_list_int(poly_list_int):
    result = []
    for i in range(len(poly_list_int)):
        result += [i * poly_list_int[i]]
    if len(poly_list_int) == 1:
        return result      # return [0]
    else:
        return result[1:]  # remove the leading 0
\end{lstlisting}
\caption{The reference implementation for \t{computeDeriv} function.}
\figlabel{compderivrefimpl}
\end{figure}

\begin{figure*}[!htpb]
\begin{tabular}{c c}
\begin{minipage}{.4\linewidth}
\begin{center}
\begin{lstlisting}[language=python]
def computeDeriv(poly):
    deriv = []
    zero = 0
    if (len(poly) == 1):
        return deriv
    for expo in range (0, len(poly)):
        if (poly[expo] == 0):
            zero += 1
        else:
            deriv.append(poly[expo]*expo)
    return deriv
\end{lstlisting}
\end{center}
\end{minipage}

&
\begin{minipage}{.5\linewidth}
The program requires \textbf{3} changes:
\begin{itemize}
\item In the return statement \textbf{return deriv} in \textbf{line 5}, replace \textbf{deriv} by \textbf{[0]}.
\item In the comparison expression \textbf{(poly[expo] == 0)} in \textbf{line 7}, change \textbf{(poly[expo] == 0)} to \textbf{False}.
\item In the expression \textbf{range(0, len(poly))} in \textbf{line 6}, increment \textbf{0} by \textbf{1}.
\end{itemize}
\end{minipage}
\\
(a) Student's solution
&
(b) Generated Feedback
\\
\end{tabular}
\caption{(a) A student's \t{computeDeriv} solution from the 6.00x discussion board and (b) the feedback generated by our tool on this solution.}
\figlabel{compderivstudentsol}
\end{figure*}

A student solution for the \t{computeDeriv} problem taken from the
6.00x discussion
forum\footnote{\url{https://www.edx.org/courses/MITx/6.00x/2012_Fall/discussion/forum/600x_ps3_q2/threads/5085f3a27d1d422500000040}}
is shown in \figref{compderivstudentsol}(a). The student posted the
code in the forum seeking help and received two responses. The first
response asked the student to look for the first if-block return
value, and the second response said that the code should return $[0]$
instead of empty list for the first if statement. There are many
different ways to modify the code to return $[0]$ for the case
\t{len(poly)}=1. The student chose to change the initialization of the
\t{deriv} variable from $[~]$ to the list $[0]$. The problem with this
modification is that the result will now have an additional $0$ in
front of the output list for all input lists (which is undesirable for
lists of length greater than $1$). The student then posted the query
again on the forum on how to remove the leading $0$ from result, but
unfortunately this time did not get any more response.

Our tool generates the feedback shown in
\figref{compderivstudentsol}(b) for the student program in about $40$
seconds. During these $40$ seconds, the tool searches over more than
$10^{7}$ candidate fixes and finds the fix that requires minimum
number of corrections. There are three problems with the student code:
first it should return $[0]$ in line $5$ as was suggested in the forum
but wasn't specified how to make the change, second the if block
should be removed in line $7$, and third that the loop iteration
should start from index $1$ instead of $0$ in line $6$. The generated
feedback consists of four pieces of information (shown in bold in the
figure for emphasis):
\begin{itemize}
\item {the location of the error denoted by the line number.}
\item {the problematic expression in the line.}
\item {the sub-expression which needs to be modified.}
\item {the new modified value of the sub-expression.}
\end{itemize}

 The feedback generator is parameterized with a feedback-level
 parameter to generate feedback consisting of different combinations
 of the four information depending on how much information the
 instructor is willing to provide to the student.

\subsection{Workflow}
\label{sec:example-workflow}
In order to provide the level of feedback described above, the tool
needs some information from the instructor. First, the tool needs to
know what the problem is that the students are supposed to solve. The
instructor provides this information by writing a reference
implementation such as the one in \figref{compderivrefimpl}. Since
python is dynamically typed, the instructor also provides the types of
function arguments and return value. In \figref{compderivrefimpl}, the
instructor specifies the type of input argument to be list of integers
(\t{poly\_list\_int}) by appending the type to the name.

In addition to the reference implementation, the tool needs a
description of the kinds of errors students might make. We have
designed an error model language $\eml$, which can describe a set of
correction rules that denote the potential corrections to errors that
students might make. For example, in the student attempt in
\figref{compderivstudentsol}(a), we observe that corrections often
involve modifying the return value and the range iteration values. We
can specify this information with the following three correction
rules:
\begin{eqnarray*}
\t{return} ~ \aexpr & \rightarrow & \t{return} ~ [0]\\
\t{range}(\aexpr_1,\aexpr_2) & \rightarrow & \t{range}(\aexpr_1+1,\aexpr_2)\\
\aexpr_0 == \aexpr_1 & \rightarrow & \t{False}
\end{eqnarray*}
The correction rule $\t{return} ~ \aexpr \rightarrow \t{return} ~ [0]$
states that the expression of a \t{return} statement can be optionally
replaced by $[0]$. The error model for this problem that we use for
our experiments is shown in \figref{compderiverrormodel}, but we will
use this simple error model for simplifying the presentation in this
section. In later experiments, we also show how only a few tens of
incorrect solutions can provide enough information to create an error
model that can automatically provide feedback for thousands of
incorrect solutions.

The tool now needs to explore the space of all candidate programs
based on applying these correction rules to the student program, and
compute the candidate program that is equivalent to the reference
implementation and that requires minimum number of corrections. We use
constraint-based synthesis technology~\cite{sketchthesis, GulwCBA,
  SrivPts} to efficiently search over this large space of
programs. Specifically, we use the \Sk{} synthesizer that uses a
sat-based algorithm to complete program sketches (programs with holes)
so that they meet a given specification. We extend the \Sk{}
synthesizer with support for \emph{minimize} hole expressions whose
values are computed efficiently by using incremental constraint
solving. To simplify the presentation, we use a simpler language
$\impl$ (\emph{miniPython}) in place of python to explain the details
of our algorithm. In practice, our tool supports a fairly large subset
of python including closures, higher order functions and list
comprehensions.

\subsection{Solution Strategy}

\begin{figure}[!htpb]
\begin{center}
\includegraphics[scale=0.4]{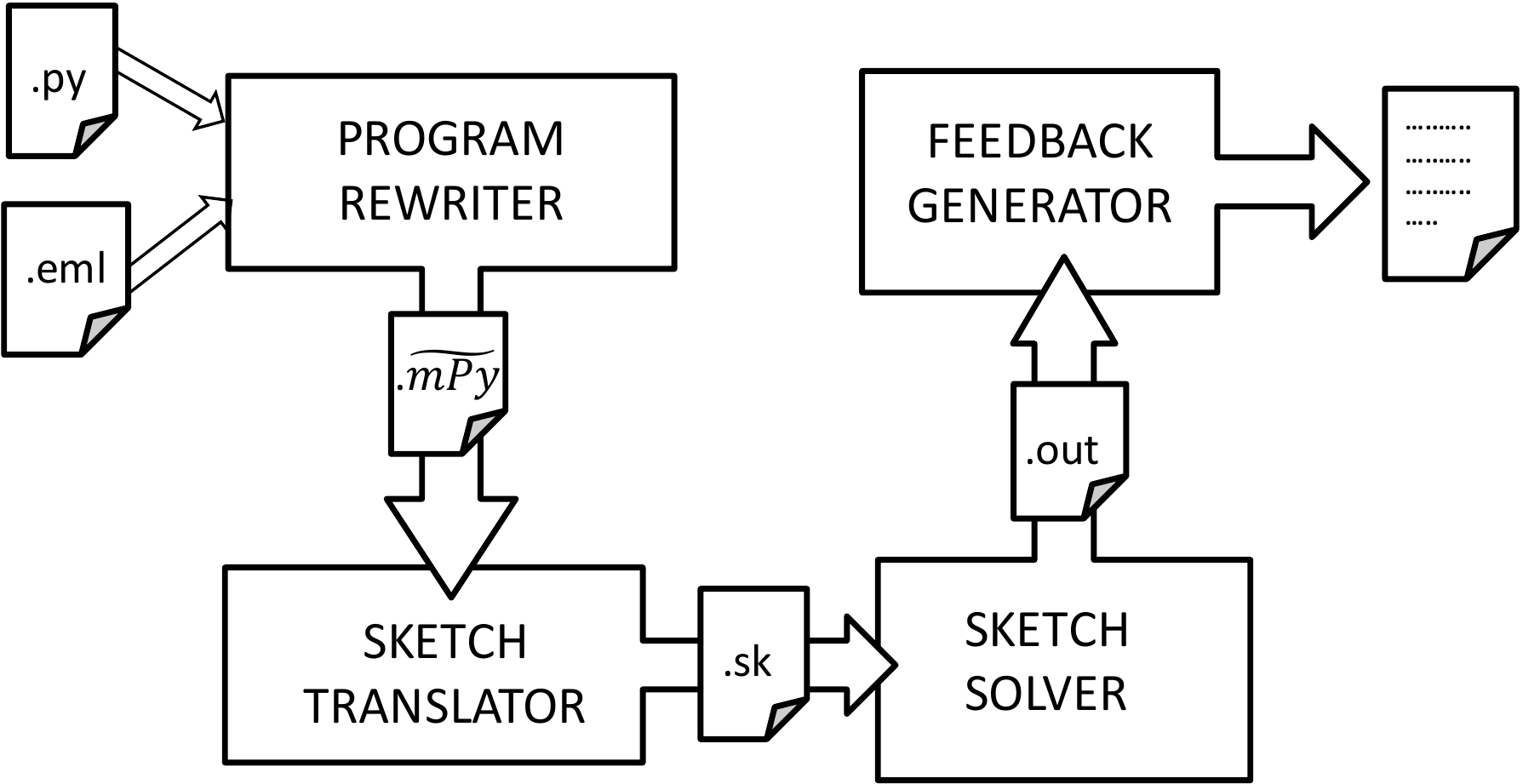}
\end{center}
\caption{The architecture of our automated feedback generation tool.}
\figlabel{tool-arch}
\end{figure}

The architecture of our tool is shown in \figref{tool-arch}. The
solution strategy to find minimal corrections to a student's solution
is based on a two-phase translation to the Sketch synthesis
language. In the first phase, the \t{Program Rewriter} uses the
correction rules to translate the solution into a language we call
$\implp$; this language provides us with a concise notation to
describe sets of $\impl$ candidate programs, together with a cost
model to reflect the number of corrections associated with each
program in this set. In the second phase, this $\implp$ program is
translated into a sketch program by the \t{Sketch Translator}.

In the case of example in \figref{compderivstudentsol}(a), the
\t{Program Rewriter} produces the $\implp$ program shown in
\figref{fig-implp-ex} using the correction rules from
\secref{example-workflow}. This program includes all the possible
corrections induced by the correction rules in the model. The $\implp$
language extends the imperative language $\impl$ with expression
choices, where the choices are denoted with squiggly
brackets. Whenever there are multiple choices for an expression or a
statement, the zero-cost choice, the one that will leave the
expression unchanged, is boxed. For example, the expression choice
$\{\defchoice{\aexpr_0}, \aexpr_1, \cdots, \aexpr_n\}$ denotes a
choice between expressions $\aexpr_0$, $\cdots$, $\aexpr_n$ where
$\aexpr_0$ denotes the zero-cost default choice.

\begin{figure}
\begin{lstlisting}[language=python,mathescape=true]
def computeDeriv(poly):
    deriv = []
    zero = 0
    if ({$\defchoice{\t{len(poly) == 1}}$, False}):
        return {$\defchoice{\t{deriv}}$,[0]}
    for expo in range ({$\defchoice{\t{0}}$,1}, len(poly)):
        if ({$\defchoice{\t{poly[expo] == 0}}$,False}):
            zero += 1
        else:
            deriv.append(poly[expo]*expo)
    return {$\defchoice{\t{deriv}}$,[0]}
\end{lstlisting}
\caption{The resulting $\implp$ program after applying correction
  rules to program in \figref{compderivstudentsol}(a).}
\figlabel{fig-implp-ex}
\end{figure}

For this simple program, the three correction rules induce a space of
$32$ different candidate programs. This candidate space is fairly
small, but the number of candidate programs grow exponentially with
the number of correction places in the program and with the number of
correction choices in the rules. The error model that we use in our
experiments induces a space of more than $10^{12}$ candidate programs
for some of the benchmark problems. In order to search this large
space efficiently, the program is translated to a sketch by the
\t{Sketch Translator}.

\subsection{Synthesizing Corrections with Sketch}
\seclabel{overview-translation}
The \Sk{}~\cite{sketchthesis} synthesis system allows programmers to
write programs while leaving fragments of it unspecified as
\emph{holes}; the contents of these holes are filled up automatically
by the synthesizer such that the program conforms to a specification
provided in terms of a reference implementation. The synthesizer uses
the \t{CEGIS} algorithm~\cite{solar2005bitstreaming} to efficiently
compute the values for holes and uses bounded symbolic verification
techniques for performing equivalence check of the two implementations.

There are two key aspects in the translation of an $\implp$ program to
a \Sk{} program. The first aspect is specific to the python
language. \Sk{} supports high-level features such as closures and
higher-order functions which simplifies the translation, but it is
statically typed whereas $\impl$ programs (like python) are
dynamically typed. The translation models the dynamically typed
variables and operations over them using struct types in \Sk{} in a
way similar to the union types. The second aspect of the translation
is the modeling of set-expressions in $\implp$ using \t{??} (holes) in
\Sk{}, which is language independent.

\begin{figure}
\begin{tabular}{c c}
{\begin{lstlisting}[language=C, numbers=none]
struct MultiType{
  int val, type;
  MTList lst;
}
\end{lstlisting}}
&~~
{\begin{lstlisting}[language=C, numbers=none]
struct MTList{
  int len;
  MultiType[len] lVals;
}
\end{lstlisting}}
\end{tabular}
\caption{The definition of \t{MultiType} struct for encoding dynamic types in python.}
\figlabel{multitypestruct}
\end{figure}
The dynamic variable types in $\impl$ language are modeled using the
\t{MultiType} struct defined in \figref{multitypestruct}. The
\t{MultiType} struct consists of a \t{type} field that denotes the
dynamic type of variables and currently supports the following set of
types \{\t{INTEGER}, \t{BOOL}, \t{TYPE}, \t{LIST}, \t{TUPLE}\}. The
\t{val} field stores the integer value or the Boolean value of the
variables, whereas the \t{lst} field of type \t{MTList} stores the
value of list and tuple variables. The \t{MTList} struct consists of a
field \t{len} that denotes the length of the list and a field
\t{lVals} of type array of \t{MultiType} that stores the list
elements. For example, the integer value \t{5} is represented as the
value \t{MultiType(val=5, flag=INTEGER)} and the list \t{[1,2]} is
represented as the value \t{MultiType(lst=new MTList(len=2,lVals=\{new
  MultiType(val=1,flag=INTEGER), new MultiType(val=2,flag=INTEGER)\}),
  flag=LIST)}.

\techrep{
\begin{figure}[!htpb]
\begin{lstlisting}[language=C, morekeywords={implements, bit, MultiType, minimize}]
  #include <multiTypeLib.c>
  int totalCost = 0;
  bit choiceComp0=0, choiceComp1=0;
  bit choiceIdx0=0, choiceRetVal0=0, choiceRetVal1=0;

  #define zeroMT new MultiType(val=0, type = INTEGER)
  #define oneMT new MultiType(val=1, type = INTEGER)
  #define emptyListMT = new MultiType(lst = new MTList(lVals={}, len=0), type = LIST)

  MultiType computeDeriv(MultiType poly){
    MultiType deriv, zero;
    assignMT(deriv, emptyListMT);
    assignMT(zero, zeroMT);
    if(getBoolValue(modComp0(compareMT(len(poly), oneMT, COMP_EQ))))
      return modRetVal0(deriv);
    MultiType iterList = range2(modIdx0(zeroMT),len(poly));
    for(int i=0; i< iterList.lst.len; i++){
      MultiType expo = iterList.lst.lVals[i];
      if(getBoolValue(modComp1(compareMT(sub(poly,expo), zeroMT, COMP_EQ))))
        assignMT(zero, binOpMT(zero, oneMT, ADD_OP));
      else
        append(deriv, binOpMT(sub(poly,expo),expo, MUL_OP));
    }
    return modRetVal1(deriv);
  }
  
  MultiType modComp0(MultiType b){
    if(??) return b;
      choiceComp0 = 1;
      return FalseMT();
  }
  MultiType modIdx0(MultiType a){
    if(??) return a;
      choiceIdx0 = 1;
      return incrementOne(a);
  }
  MultiType modRetVal0(MultiType a){
    if(??) return a;
    choiceRetVal0 = 1;
    MTList list = new MTList(lVals={zeroMT}, len =1);
    return new MultiType(lst=list, type = LIST);
  }

  int[N] computeDeriv_driver(int N, int[N] poly) implements computeDeriv_teacher_driver{
    MultiType polyMT = createMTFromArray(N, poly);
    MultiType resultMT = computeDeriv(polyMT);
    int[N] result = getArrayFromMT(resultMT);
    if(choiceComp0) totalCost = totalCost + 1;
    if(choiceComp1) totalCost = totalCost + 1;
    if(choiceIdx0) totalCost = totalCost + 1;
    if(choiceRetVal0) totalCost = totalCost + 1;
    if(choiceRetVal1) totalCost = totalCost + 1;
    minimize(totalCost);
    return result;
  }

\end{lstlisting}
\caption{The sketch generated for $\implp$ program in \figref{fig-implp-ex}.}
\figlabel{fig-sketch-ex}
\end{figure}
}

 The second key aspect of this translation is the translation of
 expression choices in $\implp$. The \Sk{} construct \t{??} denotes an
 unknown integer hole that can be assigned any constant integer value
 by the synthesizer. The expression choices in $\implp$ are translated
 to functions in \Sk{} that based on the unknown hole values return
 either the default expression or one of the other expression
 choices. Each such function is associated with a unique Boolean
 choice variable, which is set by the function whenever it returns a
 non-default expression choice. For example, the set-statement
 \t{return} \{$\defchoice{\t{deriv}}$,[0]\}; (line $5$ in
 \figref{fig-implp-ex}) is translated to \t{return modRetVal0(deriv)},
 where the \t{modRetVal0} function is defined as:

\begin{lstlisting}[language=C, numbers=none]
  MultiType modRetVal0(MultiType a){
    if(??) return a; // default choice
    choiceRetVal0 = True; // non-default choice
    MTList list = new MTList(lVals={new MultiType(val=0, flag=INTEGER)}, len=1);
    return new MultiType(lst=list, type = LIST);
  }
\end{lstlisting}

The translation phase also generate a \Sk{} harness that calls and
compares the outputs of the translated student and reference
implementations on all inputs of a bounded size. For example in case
of the \t{computeDeriv} function that takes a list as input, with
bounds of $n=4$ for both the number of integer bits and the maximum
length of the input list, the harness makes sure that the output of
the two implementations match for more than $2^{16}$ different inputs
as opposed to $10$ test-cases used in 6.00x. The harness also defines
a \t{totalCost} variable as a function of all choice variables that
computes the total number of modifications performed in the original
program, and asserts that the value of \t{totalCost} should be
minimized. The synthesizer then solves this minimization problem
efficiently using an incremental solving algorithm \t{CEGISMIN}
described in \secref{alg-increment-solving}.

After the synthesizer finds a solution, the \t{Feedback Generator}
uses the solution to the unknown integer holes in the sketch to
compute the choices made by the synthesizer and generates the
corresponding feedback. For this example, the tool generates the
feedback shown in \figref{compderivstudentsol}(b) in less than $40$
seconds.

%% {\small
%% \begin{verbatim}
%% 1. In the return statement return deriv in line 5, modify deriv.
%% 2. In the comparison expression (poly[expo] == 0) in line 7, modify ==.
%% 3. In the expression range(0, len(poly)) in line 6, modify 0.
%% \end{verbatim}}
%% \begin{tabular}{p{8cm} p{1cm}}
%% {\small
%% \begin{verbatim}
%% 1. Inspect the expression return deriv in line 5.
%% 2. Inspect the expression (poly[expo] == 0) in line 7.
%% 3. Inspect the expression range(0, len(poly)) in line 6.
%% \end{verbatim}}
%% &
%% {\small 
%% \begin{verbatim}
%% 1. Inspect line 5.
%% 2. Inspect line 7.
%% 3. Inspect line 6.
%% \end{verbatim}}

\section{\eml: Error Model Language}
\seclabel{fuzzlanguage} In this section, we describe the syntax and
semantics of the error model language $\eml$. An $\eml$ error model
consists of a set of rewrite rules that captures the potential
corrections for mistakes that students might make in their
solutions. We define the rewrite rules over a simple python-like
imperative language $\impl$. A rewrite rule transforms a program
element in $\impl$ to a set of weighted $\impl$ program elements. This
weighted set of $\impl$ program elements is represented succinctly as
an $\implp$ program element, where $\implp$ extends the $\impl$
language with set-exprs (set of expressions) and set-stmts (set of
statements). The weight associated with a program element in this set
denotes the cost of performing the corresponding correction. An error
model transforms an $\impl$ program to an $\implp$ program
(representing a set of $\impl$ programs) by recursively applying the
rewrite rules. We show that this transformation is deterministic and
is guaranteed to terminate on \emph{well-formed} error models.

\subsection{$\impl$ and $\implp$ languages}

\begin{figure*}[!htpb]
\begin{tabular}{c|c}
\begin{minipage}{3in}
\begin{eqnarray*}
\mbox{Arith Expr } \aexpr & \bnfas & \intN ~ \bnfalt [~]  ~ \bnfalt ~ \var ~ \bnfalt ~ \aexpr[\aexpr] ~ \bnfalt ~ \aexpr_0 ~ \aop ~ \aexpr_1\\
& \bnfalt & [\aexpr_1, \cdots, \aexpr_n] ~ \bnfalt ~ \fvar(\aexpr_0, \cdots, \aexpr_n)\\
& \bnfalt & \aexpr_0 ~ \t{if} ~ \bexpr ~ \t{else} ~ \aexpr_1\\
\mbox{Arith Op } \aop & \bnfas & + ~ \bnfalt ~ - ~ \bnfalt ~ \times ~ \bnfalt ~/ ~ \bnfalt **\\
\mbox{Bool Expr } \bexpr & \bnfas & \t{not} ~ \bexpr ~ \bnfalt ~ \aexpr_0 ~ \cop ~ \aexpr_1 ~ \bnfalt ~\bexpr_0 ~ \bop ~ \bexpr_1\\
\mbox{Comp Op } \cop & \bnfas & == ~ \bnfalt ~ < ~ \bnfalt  ~ > ~ \bnfalt~ \leq ~ \bnfalt \geq\\
\mbox{Bool Op } \bop & \bnfas & \t{and} ~ \bnfalt ~ \t{or}\\
\mbox{Stmt Expr } \sexpr & \bnfas & \var = \aexpr ~ \bnfalt ~ \sexpr_0;\sexpr_1 ~ \bnfalt ~ \t{while} ~ \bexpr: ~\sexpr\\
 &\bnfalt & \t{if} ~ \bexpr: ~ \sexpr_0 ~ \t{else:} ~ \sexpr_1\\
& \bnfalt & \t{for}~\aexpr_0~\t{in}~\aexpr_1:~\sexpr ~ \bnfalt ~ \t{return} ~ \aexpr\\
\mbox{Func Def. } \prog & \bnfas & \t{def} ~ f(\aexpr_1, \cdots, \aexpr_n): ~\sexpr
\end{eqnarray*}
\end{minipage}
&
\begin{minipage}{3.5in}
\begin{eqnarray*}
\mbox{Arith set-expr } \saexpr & \bnfas & \aexpr ~ \bnfalt ~ \{\defchoice{\saexpr_0},\cdots, \saexpr_n\} ~ \bnfalt ~ \saexpr[\saexpr] ~ \bnfalt ~ \saexpr_0 ~ \saop ~ \saexpr_1\\
& \bnfalt & [\saexpr_0,\cdots,\saexpr_n] ~ \bnfalt ~ \fs(\saexpr_0, \cdots, \saexpr_n)\\
\mbox{set-op } \sop_x & \bnfas & \aop ~ \bnfalt \{\defchoice{\sop_{x_0}},\cdots, \sop_{x_n} \}\\
\mbox{Bool set-expr } \sbexpr & \bnfas & \bexpr ~ \bnfalt ~ \{\defchoice{\sbexpr_0},\cdots, \sbexpr_n\} ~ \bnfalt ~\t{not} ~ \sbexpr ~ \bnfalt ~ \saexpr_0 ~ \scop ~ \saexpr_1 ~ \bnfalt ~\sbexpr_0 ~ \sbop ~ \sbexpr_1\\
\mbox{Stmt set-expr } \ssexpr & \bnfas & \sexpr ~ \bnfalt ~ \{\defchoice{\ssexpr_0},\cdots, \ssexpr_n\} ~ \bnfalt ~ \svar := \saexpr ~ \bnfalt ~ \ssexpr_0;\ssexpr_1 \\
& \bnfalt & \t{while} ~ \sbexpr: ~ \ssexpr ~ \bnfalt ~ \t{for}~\saexpr_0~\t{in}~\saexpr_1:~\ssexpr\\
 &\bnfalt & \t{if} ~ \sbexpr: ~ \ssexpr_0 ~ \t{else}: ~ \ssexpr_1 ~ \bnfalt ~ \t{return} ~ \saexpr\\
\mbox{Func Def } \sprog & \bnfas & \t{def} ~ f(\aexpr_1, \cdots, \aexpr_n) ~\ssexpr
\end{eqnarray*}
\end{minipage}\\\\
(a)~$\impl$ & (b)~$\implp$ \\
\end{tabular}
\caption{The syntax for (a) $\impl$ and (b) $\implp$ languages.}
\label{fig-lang-table}
\end{figure*}

The syntax for the simple imperative language $\impl$ is shown in
Figure~\ref{fig-lang-table}(a) and the syntax of $\implp$ language is
shown in Figure~\ref{fig-lang-table}(b). The purpose of $\implp$
language is to represent a large collection of $\impl$ programs
succinctly. The $\implp$ language consists of set-expressions
($\saexpr$ and $\sbexpr$) and set-statements ($\ssexpr$) that
represent a weighted set of corresponding $\impl$ expressions and
statements respectively. For example, the set expression
$\{\defchoice{n_0},\cdots,n_k\}$ represents a weighted set of constant
integers where $n_0$ denotes the default integer value associated with
cost $0$ and all other integer constants ($n_1,\cdots,n_k$) are
associated with cost $1$. The sets of composite expressions are
represented succinctly in terms of sets of their constituent
sub-expressions. For example, the composite expression
$\{\defchoice{a_0},a_0+1\}\{\defchoice{<},\le,>,\ge,==,\neq\}\{\defchoice{a_1},a_1+1,a_1-1\}$
represents $36$ $\impl$ expressions.

Each $\impl$ program in the set of programs represented by an $\implp$
program is associated with a cost (weight) that encodes the number of
modifications performed in the original program to obtain the
transformed program. This cost allows the tool to search for
corrections that require minimum number of modifications. The weighted
set of $\impl$ programs is defined using the $\sem{~}$ function shown
\nottechrep{partially} in \figref{fig-implp-cost}\nottechrep{, the
  complete function definition can be found in \cite{techrep}}. The
$\sem{~}$ function on $\impl$ expressions such as $\aexpr$ returns a
singleton set $\{(a,0)\}$ consisting of the corresponding expression
that is associated with cost 0. On set-expressions of the form
$\{\defchoice{\saexpr_0}, \cdots, \saexpr_n\}$, the function returns
the union of the weighted set of $\impl$ expressions corresponding to
the default set-expression ($\sem{\saexpr_0}$) and the weighted set of
expressions corresponding to other set-expressions
($\saexpr_1,\cdots,\saexpr_n$), where each expression in
$\sem{\saexpr_i)}$ is associated with an additional cost of $1$. On
composite expressions, the function computes the weighted set
recursively by taking the cross-product of weighted sets of its
constituent sub-expressions and adding their corresponding costs. For
example, the weighted set for composite expression
$\tilde{x}[\tilde{y}]$ consists of an expression $x_i[y_j]$ associated
with cost $c_{x_i}+c_{y_j}$ for each $(x_i,c_{x_i}) \in
\sem{\tilde{x}}$ and $(y_j,c_{y_j})\in \sem{\tilde{y}}$.

\nottechrep{
\begin{figure}[!htpb]

\begin{eqnarray*}
\sem{\aexpr} & = &  \{(\aexpr,0)\}\\
\sem{\{\defchoice{\saexpr_0}, \cdots, \saexpr_n\}} & = &  \sem{\saexpr_0} \cup \{(\aexpr,~c+1) \where (\aexpr,c)\in\sem{\saexpr_i}_{0 < i \leq n} \}\\
\sem{\saexpr_0[\saexpr_1]} & = & \{(\aexpr_0[\aexpr_1],~c_0+c_1) \where (\aexpr_i,c_i)\in \sem{\saexpr_i}_{i \in \{0,1\}}\}\\
\sem{\t{while} ~\sbexpr: \ssexpr} & = & \{(\t{while} ~ \bexpr: ~ \sexpr, ~ c_b + c_s) \where\\
& & \hspace*{15mm}(\bexpr,c_b) \in \sem{\sbexpr}, (\sexpr,c_s) \in \sem{\ssexpr}\}
\end{eqnarray*}

\caption{The $\sem{~}$ function (shown partially) that translates an
  $\implp$ program to a weighted set of $\impl$ programs.}
\figlabel{fig-implp-cost}
\end{figure}
}

\techrep{
\begin{figure*}[!htpb]
\begin{eqnarray*}
\sem{\aexpr} & = &  \{(\aexpr,0)\}\\
\sem{\{\defchoice{\saexpr_0}, \cdots, \saexpr_n\}} & = &  \sem{\saexpr_0} \cup \{(\aexpr,~c+1) \where (\aexpr,c)\in\sem{\saexpr_i}, 0 < i \leq n \}\\
\sem{\saexpr_0[\saexpr_1]} & = & \{(\aexpr_0[\aexpr_1],~c_0+c_1) \where (\aexpr_0,c_0)\in \sem{\saexpr_0}, (\aexpr_1,c_1) \in \sem{\saexpr_1} \}\\
\sem{\saexpr_0 ~ \sxop ~ \saexpr_1} & = & \{(\aexpr_0 ~ \xop ~ \aexpr_1, ~c_0+c_x+c_1) \where (\aexpr_0,c_0)\in\sem{\saexpr_0}, (\xop,c_x)\in\sem{\sxop}, (\aexpr_1,c_1)\in\sem{\saexpr_1}, x \in \{a,b,c\} \}\\
\sem{ \fs(\saexpr_1, \cdots, \saexpr_n)} & = & \{(f(\aexpr_1,\cdots,\aexpr_n),~c_f+c_1\cdots+c_n) \where (f,c_f)\in\sem{\fs}, (\aexpr_i,c_i) \in \sem{\saexpr_i} \}\\
\sem{\aop} & = &  \{(\aop,0)\}\\
\sem{\{\defchoice{\sop_{a_0}}, \cdots, \sop_{a_n}\}} & = &  \sem{\sop_{a_0}} \cup \{(\aop,~c+1) \where (\aop,c)\in\sem{\sop_{a_i}} \}\\
\sem{\bexpr} & = &  \{(\bexpr,0)\}\\
\sem{\{\defchoice{\sbexpr_0}, \cdots, \sbexpr_n\}} & = & \sem{\sbexpr_0} \cup \{(\bexpr,~c+1) \where (\bexpr,c)\in\sem{\sbexpr_i}, 0 < i \leq n \}\\
%% \sem{\saexpr_0 ~ \scop ~ \saexpr_1} & = & \{(\aexpr_0 ~ \cop ~ \aexpr_1, ~c_0+c_b+c_1) \where (\aexpr_0,c_0)\in\sem{\saexpr_0}, (\cop,c_b)\in\sem{\scop}, (\aexpr_1,c_1)\in\sem{\saexpr_1} \}\\
%% \sem{\sbexpr_0 ~ \sbop ~ \sbexpr_1} & = & \{(\bexpr_0 ~ \bop ~ \bexpr_1, ~c_0+c_c+c_1) \where (\bexpr_0,c_0)\in\sem{\sbexpr_0}, (\bop,c_c)\in\sem{\sbop}, (\bexpr_1,c_1)\in\sem{\sbexpr_1} \}\\
\sem{\sexpr} & = &  \{(\sexpr,0)\}\\
\sem{\{\defchoice{\ssexpr_0}, \cdots, \ssexpr_n\}} & = & \sem{\ssexpr_0} \cup \{(\sexpr,~c+1) \where (\sexpr,c)\in\sem{\ssexpr_i}, 0 < i \leq n \}\\
\sem{\svar := \saexpr} & = &  \{(\var := \aexpr, ~c_0 + c_1) \where (\var,c_0) \in \sem{\svar}, (\aexpr,c_1) \in \sem{\saexpr}\}\\
\sem{\ssexpr_0;\ssexpr_1} & = & \{ (\sexpr_0;\sexpr_1, ~c_0+c_1) \where (\sexpr_0,c_0) \in \sem{\ssexpr_0}, (\sexpr_1,c_1) \in \sem{\ssexpr_1}\}\\
\sem{\t{if} ~ \sbexpr ~ \t{then} ~ \ssexpr_0 ~ \t{else} ~ \ssexpr_1} & = & \{(\t{if} ~ \bexpr ~ \t{then} ~ \sexpr_0 ~ \t{else} ~ \sexpr_1, ~c_b+c_0+c_1) \where (\bexpr,c_b) \in \sem{\sbexpr}, (\sexpr_0,c_0) \in \sem{\ssexpr_0}, (\sexpr_1,c_1) \in \sem{\ssexpr_1}\}\\
\sem{\t{while} ~ \sbexpr ~ \t{do} ~ \ssexpr} & = & \{(\t{while} ~ \bexpr ~ \t{do} ~ \sexpr, ~ c_b + c_s) \where (\bexpr,c_b) \in \sem{\sbexpr}, (\sexpr,c_s) \in \sem{\ssexpr}\}\\ 
\sem{\t{return} ~ \saexpr} & = & \{(\t{return} ~ \aexpr, ~ c) \where (\aexpr,c) \in \sem{\aexpr}\}\\ 
\end{eqnarray*}
\caption{The $\sem{~}$ function that translates an
  $\implp$ program to a weighted set of $\impl$ programs.}
\figlabel{fig-implp-cost}
\end{figure*}
}

\subsection{Syntax of $\eml$}

An $\eml$ error model consists of a set of correction rules that are
used to transform an $\impl$ program to an $\implp$ program. A {\em
  correction rule} $\crule$ is written as a rewrite rule $L
\rightarrow R$, where $L$ and $R$ denote a \emph{program element} in
$\impl$ and $\implp$ respectively. A program element can either be a
term, an expression, a statement, a method or the program itself. The
left hand side ($L$) denotes an $\impl$ program element that is
pattern matched to be transformed to an $\implp$ program element
denoted by the right hand side ($R$). The left hand side of the rule
can use free variables whereas the right hand side can only refer to
the variables present in the left hand side. The language also
supports a special $\prime$ (prime) operator that can be used to
\emph{tag} sub-expressions in $R$ that are further transformed
recursively using the error model. The rules use a shorthand notation
$\choosevar \aexpr$ (in the right hand side) to denote the set of all
variables that are of the same type as the type of expression $\aexpr$
and are in scope at the corresponding program location. We assume each
correction rule is associated with cost $1$, but it can be easily
extended to different costs to account for different levels of
mistakes.

\begin{figure}[!htpb]
\begin{eqnarray*}
\mbox{\indF:} ~~\var[\aexpr] & \rightarrow & \var[\{\aexpr+1, \aexpr -
  1, \choosevar\aexpr\}]\\ \mbox{\initF:}
~~\var=\intN & \rightarrow & \var = \{n+1, n-1, 0\}\\ 
\mbox{\rangeF:} ~~ \t{range}(\aexpr_0,\aexpr_1) &  \rightarrow & \t{range}(\{0,1,\aexpr_0-1,\aexpr_0+1\},\\
& &  \hspace*{10mm}\{\aexpr_1+1,\aexpr_1-1\})\\
\mbox{\condF:}
~~ \aexpr_0 ~ \cop ~ \aexpr_1 ~ & \rightarrow & \{\{\aexpr_0\dash-1, ?\aexpr_0 \} ~\scop~\{\aexpr_1\dash-1, 0, 1, ?\aexpr_1 \}, \\ 
& & \hspace*{2mm}\t{True}, \t{False}\}\\
&
& \mbox{where } \scop = \{<, >, \leq, \geq, ==,
\neq\}\\ 
\mbox{\retF:} ~~ \t{return} ~\aexpr & \rightarrow & \t{return} \{[0]~\t{if}~\t{len}(\aexpr)==1 ~ \t{else}~\aexpr,\\
& & \hspace*{6mm} \aexpr[1:] ~ \t{if}~(\t{len}(\aexpr) > 1) ~\t{else} ~\aexpr\} \end{eqnarray*}
\caption{The error model $\fm$ for the \t{computeDeriv} problem.}
\figlabel{compderiverrormodel}
\end{figure}

\begin{example}
\label{ex-array-sort-model}
The error model for the \t{computeDeriv} problem is shown in
\figref{compderiverrormodel}. The \indF rewrite rule transforms the
list access indices. The \initF rule transforms the right hand size of
constant initializations. The \rangeF rule transforms the arguments
for the \t{range} function; similar rules are defined in the model for
other \t{range} functions that take one and three arguments. The
\condF rule transforms the operands and operator of the
comparisons. The \retF rule adds the two common corner cases of
returning $[0]$ when the length of input list is $1$, and the case of
deleting the first list element before returning the list. Note that
these rewrite rules define the corrections that can be performed
optionally; the zero cost (default) case of not correcting a program
element is added automatically as described in \secref{eml-semantics}.

\end{example}

\begin{definition} Well-formed Rewrite Rule : A rewrite rule $\crule : L \rightarrow R$ is defined to be well-formed if all tagged sub-terms $t\dash$ in $R$ have a smaller size syntax tree than that of $L$.
\end{definition}

The rewrite rule $\crule_1: \var[\aexpr] \rightarrow
\{(\var[\aexpr])\dash + 1\}$ is not a well-formed rewrite rule as the
size of the tagged sub-term ($\var[\aexpr]$) of $R$ is the same as
that of the left hand side $L$. On the other hand, the rewrite rule
$\crule_2: \var[\aexpr] \rightarrow \{\var\dash[\aexpr\dash]+1\}$ is
well-formed.

\begin{definition} Well-formed Error Model : An error model $\fm$ is defined to be well-formed if all of its constituent rewrite rules $\crule_i \in \fm$ are well-formed.
\end{definition} 

\subsection{Transformation with $\eml$}
\seclabel{eml-semantics} An error model $\emod$ is syntactically
translated to a function $\transExpr_\emod$ that transforms an $\impl$
program to an $\implp$ program. The $\transExpr_\emod$ function first
traverses the program element $\wexpr$ in the default way, i.e. no
transformation happens at this level of the syntax tree, and the
method is called recursively on all of its top-level sub-terms $t$ to
obtain the transformed element $\wexpr_0 \in \implp$. For each
correction rule $\crule_i : L_i \rightarrow R_i$ in the error model
$\emod$, the method contains a \matchE expression that matches the
term $\wexpr$ with the left hand side of the rule $L_i$ (with
appropriate unification of the free variables in $L_i$). If the match
succeeds, it is transformed to a term $\wexpr_i \in \implp$ as defined
by the right hand side $R_i$ of the rule after applying the
$\transExpr_\emod$ method recursively on each one of its tagged
sub-terms $t\dash$. Finally, the method returns the set of all
transformed terms $\{\defchoice{\wexpr_0},\cdots,\wexpr_n \}$.

\techrep{
\begin{figure}[!htpb]
\begin{eqnarray*}
\transExpr_\fm(\wexpr: \impl) : \implp & =\hspace*{20mm}
& \\ & & \hspace*{-60mm} \t{let} ~ \wexpr_0 =
\wexpr[t\rightarrow\transExpr_\fm(t)] ~\t{in}\\ %& & \hspace*{-60mm}
%\t{foreach}~~~ \crule_i : L_i \rightarrow R_i \in \emod\\ 
& & \hspace*{-50mm} \cdots\cdots\\
& & \hspace*{-60mm} \t{let} ~ \wexpr_i = \matchE ~ \wexpr ~\t{with}\\
 & & \hspace*{-47mm} ~L_i ~ \rightarrow ~ R_i[t'~\rightarrow~\transExpr_\fm(t)] ~ \t{in}\\
& & \hspace*{-50mm} \cdots\\ & & \hspace*{-60mm}
\{\defchoice{\wexpr_0},\cdots,\wexpr_n \}
%\bigcup_{0\leq i\leq n} \wexpr_i
\end{eqnarray*}
\caption{The syntactic translation of an error model $\emod$ to
  $\transExpr_\emod$ function.}  \figlabel{transsemantics}
\end{figure}
}

\begin{example}
\label{ex-emapplication}
Consider an error model $\emod_1$ consisting of the following three correction rules:
\begin{eqnarray*}
\mbox{$\crule_1 : $} ~~ \var[\aexpr] & \rightarrow & \var[\{\aexpr-1, \aexpr+1\}]\\
\mbox{$\crule_2 : $}~~ \aexpr_0 ~ \cop ~  \aexpr_1 ~ & \rightarrow & \{\aexpr_0\dash -1, 0 \} ~ \cop ~ \{\aexpr_1\dash -1, 0 \} \\ 
\mbox{$\crule_3 : $} ~~ \var[\aexpr] & \rightarrow & \choosevar\var[\aexpr]\\
\end{eqnarray*}

The transformation function $\transExpr_{\emod_1}$ for the error model
$\emod_1$ is shown in \figref{extranssem}.
\end{example}

\begin{figure}[!htpb]
\begin{eqnarray*}
\transExpr_{\emod_1}(\wexpr: \impl) : \implp & =  \hspace*{20mm}&\\
& & \hspace*{-60mm} \t{\bf let} ~ \wexpr_0 = \wexpr[t\rightarrow\transExpr_{\emod_1}(t)] ~ \t{\bf in} ~~ (*~~\mbox{$t$ : a sub-term of $w$}~~ *)\\
& & \hspace*{-60mm} \t{\bf let} ~ \wexpr_1 = \t{\bf Match} ~ \wexpr ~\t{\bf with} \\
& & \hspace*{-45mm} \var[\aexpr] \rightarrow \var[\{\aexpr+1, \aexpr-1\}] ~ \t{\bf in}\\
& & \hspace*{-60mm} \t{\bf let} ~  \wexpr_2 = \t{\bf Match} ~\wexpr ~\t{\bf with} \\
& & \hspace*{-45mm} \aexpr_0 ~ \cop ~\aexpr_1 \rightarrow \{\transExpr_{\emod_1}(\aexpr_0)-1,0\} ~ \cop\\
& & \hspace*{-20mm} \{\transExpr_{\emod_1}(\aexpr_1)-1,0\} ~\t{\bf in}\\
& & \hspace*{-60mm} \{\defchoice{\wexpr_0},\wexpr_1,\wexpr_2\}
\end{eqnarray*}
\caption{The $\transExpr_{\emod_1}$ method for error model $\emod_1$.}
\figlabel{extranssem}
\end{figure}

\begin{figure*}
$
\begin{array}{c}
\tr(x[i]<y[j]) \equiv \{\defchoice{\tr(x[i])<\tr(y[j])}, \{\tr(x[i])-1, 0\}< \{\tr(y[j])-1, 0\}\}
\\\\
\tr(x[i]) \equiv \{\defchoice{\tr(x)[\tr(i)]}, x[\{i+1,i-1\}], y[i]\}\\\\
\tr(y[j]) \equiv \{\defchoice{\tr(y)[\tr(j)]}, y[\{j+1,j-1\}], x[j]\}\\\\
\tr(x) \equiv \{\defchoice{x}\} \hspace*{10mm} \tr(i) \equiv \{\defchoice{i}\} \hspace*{10mm} \tr(y) \equiv \{\defchoice{y}\} \hspace*{10mm} \tr(j) \equiv \{\defchoice{j}\}\\\\
\hspace*{-55mm}\mbox{\t{Therefore, after substitution the result is:}}\\\\
\tr(x[i]<y[j]) \equiv \{\defchoice{\{\defchoice{\defchoice{x}\hspace*{0.5mm}[\hspace*{0.5mm}\defchoice{i}\hspace*{0.5mm}]}, x[\{i+1,i-1\}], y[i]\} < \{\defchoice{\defchoice{y}\hspace*{0.5mm}[\hspace*{0.5mm}\defchoice{j}\hspace*{0.5mm}]}, y[\{j+1,j-1\}], x[j]\}}\hspace*{0.5mm},\vspace*{2mm}\\ \hspace*{55mm}\{\{\defchoice{\defchoice{x}\hspace*{0.5mm}[\hspace*{0.5mm}\defchoice{i}\hspace*{0.5mm}]}, x[\{i+1,i-1\}], y[i]\}-1, 0\} < \{\{\defchoice{\defchoice{y}\hspace*{0.5mm}[\hspace*{0.5mm}\defchoice{j}\hspace*{0.5mm}]}, y[\{j+1,j-1\}], x[j]\}-1, 0\}\}\\
\end{array}
$
\caption{Application of $\tr_{\emod_1}$ (abbreviated $\tr$~) on expression $(x[i]<y[j])$.}
\figlabel{emapplication}
\end{figure*}

 The recursive steps of application of $\transExpr_{\emod_1}$ function
 on expression $(x[i] < y[j])$ are shown in
 \figref{emapplication}. This example illustrates two interesting
 features of the transformation function:
\begin{itemize}
{\item \bf Nested Transformations :} Once a rewrite rule $L \rightarrow R$
is applied to transform a program element matching $L$ to $R$, the
instructor may want to apply another rewrite rule on only a few
sub-terms of $R$. For example, she may want to avoid transforming the
sub-terms which have already been transformed by some other correction
rule. The $\eml$ language facilitates making such distinction between
the sub-terms for performing nested corrections using the $\prime$
(prime) operator. Only the sub-terms in $R$ that are tagged with the
prime operator are visited for applying further transformations (using
the $\transExpr_\emod$ method recursively on its tagged sub-terms
$t\dash$), whereas the remaining non-tagged sub-terms are not
transformed any further. After applying the rewrite rule $\crule_2$ in
the example, the sub-terms $x[i]$ and $y[j]$ are further transformed
by applying rewrite rules $\crule_1$ and $\crule_3$.

\item {\bf Ambiguous Transformations :} While transforming a program using
  an error model, it may happen that there are multiple rewrite rules
  that pattern match the program element $\wexpr$. After applying
  rewrite rule $\crule_2$ in the example, there are two rewrite rules
  $\crule_1$ and $\crule_3$ that pattern match the terms $x[i]$ and
  $y[j]$. After applying one of these rules ($\crule_1$ or
  $\crule_3$) to an expression $\var[\aexpr]$, we cannot apply the
  other rule to the transformed expression. In such ambiguous cases,
  the $\transExpr_\emod$ function creates a separate copy of the transformed
  program element ($\wexpr_i$) for each ambiguous choice and then
  performs the set union of all such elements to obtain the
  transformed program element. This semantics of handling ambiguity of
  rewrite rules also matches naturally with the intent of the
  instructor. If the instructor wanted to perform both transformations
  together on array accesses, she could have provided a combined
  rewrite rule such as $\var[\aexpr] \rightarrow ?\var[\{\aexpr+1, \aexpr-1\}]$.
\end{itemize}

\begin{theorem}
Given a well-formed error model $\emod$, the transformation function
$\transExpr_\emod$ always terminates.
\end{theorem}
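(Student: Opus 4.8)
The plan is to argue termination by exhibiting a well-founded measure on the calls to $\transExpr_\emod$ that strictly decreases along every recursive call chain. The natural measure is the size of the syntax tree of the $\impl$ program element passed as argument. I would set $\mathrm{size}(\wexpr)$ to be the number of nodes in the abstract syntax tree of $\wexpr$, and show that every recursive invocation of $\transExpr_\emod$ is applied to an argument of strictly smaller size than the current argument $\wexpr$.

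First I would classify the recursive calls that $\transExpr_\emod(\wexpr)$ makes, reading them off from Figure~\ref{fig:transsemantics}. There are two kinds. (i) The calls generated while computing the default element $\wexpr_0 = \wexpr[t \to \transExpr_\emod(t)]$: these are applied to the top-level proper sub-terms $t$ of $\wexpr$, and for any such $t$ we have $\mathrm{size}(t) < \mathrm{size}(\wexpr)$ since $t$ is a strict subtree of $\wexpr$. (ii) The calls generated when a rule $\crule_i : L_i \to R_i$ matches $\wexpr$: here the recursion is applied precisely to the tagged sub-terms $t\dash$ appearing in $R_i$, after $\wexpr$ has been unified with $L_i$. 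Since $\emod$ is well-formed, every tagged sub-term $t\dash$ in $R_i$ has a syntax tree strictly smaller than $L_i$; and because $\wexpr$ matched $L_i$ by unification, the instantiation of $t\dash$ under that match is a term whose size is bounded by the size of the corresponding instantiated sub-pattern of $L_i$, which in turn is at most $\mathrm{size}(\wexpr)$. The key point is that the well-formedness inequality is on \emph{syntactic} sizes of the patterns, and unification only substitutes the free variables of $L_i$ by sub-terms of $\wexpr$; a short structural-induction lemma shows that under any such unifying substitution $\theta$, $\mathrm{size}(\theta(t\dash)) < \mathrm{size}(\theta(L_i)) = \mathrm{size}(\wexpr)$, because a tagged sub-term of $R_i$ can only mention variables bound in $L_i$ (the right-hand side refers only to variables present in the left-hand side). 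Hence in both cases the recursive argument is strictly smaller.

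With that in hand, termination follows by strong induction on $\mathrm{size}(\wexpr)$: the call $\transExpr_\emod(\wexpr)$ makes only finitely many recursive calls (finitely many sub-terms, finitely many rules, finitely many tagged positions per rule), and each is on an argument of strictly smaller size, so by the induction hypothesis each of those terminates, and therefore so does $\transExpr_\emod(\wexpr)$. I should also note why the ambiguous-transformation case does not break this: when several rules match, $\transExpr_\emod$ produces one copy $\wexpr_i$ per matching rule and unions them, but each copy still only recurses on tagged sub-terms of the corresponding $R_i$, so the size-decrease argument applies copy by copy and the finite branching is preserved.

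The main obstacle is the unification step in case (ii): one has to be careful that matching $\wexpr$ against $L_i$ cannot cause a tagged sub-term's instantiation to grow. The well-formedness condition is a purely syntactic comparison of the rule's left- and right-hand sides, so the crux is the lemma that size is monotone under the unifying substitution in the sense needed — i.e., that each tagged $t\dash$, being built only from variables occurring in $L_i$ (possibly with additional syntactic context stripped away, by well-formedness strictly less context than all of $L_i$), instantiates to something no larger than $\wexpr$ itself. Establishing this lemma cleanly — including ruling out pathological rules where the same free variable appears in $L_i$ in a position that is not reflected in the matched term — is where the real work lies; everything after it is a routine well-founded-induction wrap-up.
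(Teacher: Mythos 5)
Your overall route is the same as the paper's: the paper's proof is exactly your well-founded-induction-on-syntax-tree-size argument compressed into three sentences (well-formedness implies each rewrite-rule application shrinks the terms that still have to be visited, hence $\transExpr_\emod$ terminates in finitely many steps), and your case split between default-traversal calls on proper sub-terms and rule-induced calls on tagged sub-terms, plus the remark about ambiguous matches only adding finite branching, is a faithful elaboration of it.

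The one point where you go beyond the paper is also where your argument overclaims. The lemma you invoke --- that $\mathrm{size}(\theta(t\dash)) < \mathrm{size}(\theta(L_i)) = \mathrm{size}(\wexpr)$ for every unifying substitution $\theta$ --- does not follow from the stated well-formedness condition, which compares only the \emph{pattern} sizes of $t\dash$ and $L_i$. The problematic case is not, as you suggest, a variable of $L_i$ matched inconsistently, but a tagged sub-term that \emph{duplicates} a variable of $L_i$. For instance, the rule $\var + (\aexpr_0 + \aexpr_1) \rightarrow \{(\var+\var)\dash\}$ is well-formed (pattern sizes $3 < 5$), yet matching $\wexpr = x + (c+d)$ binds $\var$ to $x$ and the recursive call is on $x+x$, which is larger than $\wexpr$ as soon as $x$ is big; worse, for $\wexpr = x + x$ with $x = c+d$ the instantiated tagged term is $\wexpr$ itself, so the recursion loops and no ``short structural induction'' can rescue the inequality. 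To make your step (ii) sound you need an extra hypothesis, e.g.\ that no variable occurs more often in a tagged sub-term of $R_i$ than in $L_i$ --- in particular it holds when tags sit only on variables of $L_i$, as in all of the paper's example rules, where $\theta(t\dash)$ is a proper sub-term of $\wexpr$ and the decrease is immediate. To be fair, the paper's own proof silently assumes the same instantiated-size decrease, so this gap is shared with the paper rather than introduced by you; you were right that this is where the real work lies, but as written the needed lemma is false without the additional restriction.
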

\begin{proof}
From the definition of well-formed error model, each of its
constituent rewrite rule is also well-formed. Hence, each application
of a rewrite rule reduces the size of the syntax tree of terms that
are required to be visited further for transformation by
$\transExpr_\emod$. Therefore, the $\transExpr_\emod$ function
terminates in a finite number of steps.
\end{proof}

\section{Constraint-based Solving of $\implp$ programs}
\seclabel{algorithm} In the previous section, we saw the
transformation of an $\impl$ program to an $\implp$ program based on
an error model. We now present the translation of $\implp$ programs
into \Sk{} programs~\cite{sketchthesis}.

\subsection{Translation of $\implp$ programs to \Sk{}}

The $\implp$ programs are translated to \Sk{} programs to perform
constraint-based analysis. The main aspects of the translation
include the translation of : (i) python-like constructs in $\implp$ to \Sk{},
and (ii) set-expr choices in $\implp$ to \Sk{}
functions.

\paragraph{Handling dynamic typing of $\implp$ variables}
The dynamic typing of $\implp$ is handled using \t{MultiType} variable
as described in \secref{overview-translation}. The $\implp$
expressions and statements are transformed to \Sk{} functions that
perform the corresponding transformations over \t{MultiType}. For
example, the python statement \t{(a = b)} is translated to
\t{assignMT(a, b)}, where the \t{assignMT} function assigns
\t{MultiType} b to a. Similarly, the binary add expression \t{(a + b)}
is translated to \t{binOpMT(a, b, ADD\_OP)} that in turn calls the
function \t{addMT(a,b)} to add \t{a} and \t{b} as shown in
\figref{addmt}.

\begin{figure}[!htpb]
\begin{lstlisting}[language=C,mathescape=true]
MultiType addMT(MultiType a, MultiType b){
  assert a.flag == b.flag; // same types can be added
  if(a.flag == INTEGER)    // add for integers
   return new MultiType(val=a.val+b.val, flag = INTEGER);
  if(a.flag == LIST){      // add for lists
    int newLen = a.lst.len + b.lst.len;
    MultiType[newLen] newLVals = a.lst.lVals;
    for(int i=0; i<b.lst.len; i++)
      newLVals[i+a.lst.len] = b.lst.lVals[i];
  return new MultiType(lst = new MTList(lVals=newLVals, len=newLen), flag=LIST);}
  $\cdots$ $\cdots$
}
\end{lstlisting}
\caption{The \t{addMT} function for adding two \t{MultiType} \t{a} and \t{b}.}
\figlabel{addmt}
\end{figure}

\paragraph{Translation of $\implp$ set-expressions} The set-expressions in $\implp$ are translated to \Sk{} functions. The
function bodies obtained from translation ($\Phi$) of some of the
interesting $\implp$ constructs are shown in
\figref{fig-implp-trans}. The \Sk{} construct ??  (called \emph{hole})
is a placeholder for a constant value, which is filled up by the \Sk{}
synthesizer while solving the constraints to satisfy the given
specification.

The singleton sets consisting of an $\impl$ expression such as
$\{\aexpr\}$ are translated simply to the corresponding expression
itself. A set-expression of the form
$\{\defchoice{\saexpr_0},\cdots,\saexpr_n\}$ is translated recursively
to the $\sif$expression :$\sif(??)  ~\itos{\saexpr_0} ~\selse ~
\itos{\{\saexpr_1, \cdots, \saexpr_n\}}$, which means that the
synthesizer can optionally select the default set-expression
$\itos{\saexpr_0}$ (by choosing ?? to be \t{true}) or select one of
the other choices ($\saexpr_1,\cdots,\saexpr_n$). The set-expressions
of the form $\{\saexpr_0,\cdots,\saexpr_n\}$ are similarly translated
but with an additional statement for setting a fresh variable
$\choicek$ if the synthesizer selects the non-default choice
$\saexpr_0$.

The translation rules for the assignment statements ($\saexpr_0 :=
\saexpr_1$) results in $\sif$ expressions on both left and right sides
of the assignment. The $\sif$ expression choices occurring on the left
hand side are desugared to individual assignments. For example, the
left hand side expression $\sif(??) ~ x ~\selse ~ y := 10$ is
desugared to $\sif(??)~ x := 10 ~\selse ~ y := 10$. The infix
operators in $\implp$ are first translated to function calls and are
then translated to sketch using the translation for set-function
expressions. The remaining $\implp$ expressions are similarly
translated recursively as shown in the figure.
\nottechrep{
\begin{figure}
\begin{tabular}{c}
\begin{minipage}{.5\linewidth}
\begin{eqnarray*}
\itos{\{\aexpr\}} &=& \aexpr\\
\itos{\{\defchoice{\saexpr_0}, \cdots, \saexpr_n\}} & = & \sif(??)~ \itos{\saexpr_0} ~\selse ~\itos{\{\saexpr_1,\cdots,\saexpr_n\}} \\
\itos{\{\saexpr_0, \cdots, \saexpr_n\}} & = & \sif(??)~ \{\choicek=1; \itos{\saexpr_0}\}\\
& & \selse ~\itos{\{\saexpr_1,\cdots,\saexpr_n\}}\\
\itos{\saexpr_0[\saexpr_1]} & = & \itos{\saexpr_0}[\itos{\saexpr_1}]\\
\itos{\saexpr_0 = \saexpr_1} & = &  \itos{\saexpr_0} := \itos{\saexpr_1}\\
\end{eqnarray*}
\end{minipage}
\end{tabular}
\caption{The translation rules (shown partially) for converting $\implp$ set-exprs to corresponding \Sk{} function bodies.}
\figlabel{fig-implp-trans}
\end{figure}
}
\techrep{\begin{figure}
\begin{tabular}{c}
\begin{minipage}{.5\linewidth}
\begin{eqnarray*}
\itos{\{\aexpr\}} &=& \aexpr\\
\itos{\{\defchoice{\saexpr_0}, \cdots, \saexpr_n\}} & = & \sif(??)~ \itos{\saexpr_0} ~\selse ~\itos{\{\saexpr_1,\cdots,\saexpr_n\}} \\
\itos{\{\saexpr_0, \cdots, \saexpr_n\}} & = & \sif(??)~ \{\choicek=1; \itos{\saexpr_0}\}\\
& & \selse ~\itos{\{\saexpr_1,\cdots,\saexpr_n\}}\\
\itos{\saexpr_0[\saexpr_1]} & = & \itos{\saexpr_0}[\itos{\saexpr_1}]\\
\itos{\{f(\saexpr_1, \cdots, \saexpr_n)\}} & = &  f(\itos{\saexpr_1}, \cdots, \itos{\saexpr_n})\\
\Phi\{\defchoice{\fs_0},\cdots,\fs_n\}(\saexpr_1, & = &  \sif(??)~\itos{\fs_0(\saexpr_1, \cdots, \saexpr_n)}\\
 \saexpr_2,\cdots, \saexpr_n)& & \selse ~ \itos{\{\fs_1,\cdots,\fs_n\}(\saexpr_1, \cdots, \saexpr_n)}\\
\Phi\{\fs_0,\cdots,\fs_n\}(\saexpr_1, & = &  \sif(??)~\{ \choicek = 1; \\
 \saexpr_2 \cdots, \saexpr_n)& &  \hspace*{15mm}\itos{\fs_0(\saexpr_1, \cdots, \saexpr_n)}\}\\
& & \selse ~ \itos{\{\fs_1,\cdots,\fs_n\}(\saexpr_1, \cdots, \saexpr_n)}\\
\itos{\saexpr_0 = \saexpr_1} & = &  \itos{\saexpr_0} := \itos{\saexpr_1}\\
\itos{\ssexpr_0;\ssexpr_1} & = &  \itos{\ssexpr_0}; \itos{\ssexpr_1}\\
\itos{\t{if} ~ \sbexpr: ~ \ssexpr_0 ~ \t{else}: ~ \ssexpr_1} & = & \t{if} ~ (\itos{\sbexpr}) ~ \{\itos{\ssexpr_0}\} ~\t{else} ~ \{\itos{\ssexpr_1}\}\\
\itos{\t{while} ~ \sbexpr: ~ \ssexpr} & = & \t{while} ~ (\itos{\sbexpr}) ~ \{ \itos{\ssexpr} \}\\ 
\itos{\t{return} ~ \saexpr} & = & \t{return} ~~ \itos{\saexpr}
\end{eqnarray*}
\end{minipage}
\end{tabular}
\caption{The translation rules for converting $\implp$ set-exprs to corresponding \Sk{} function bodies.}
\figlabel{fig-implp-trans}
\end{figure}
}

\paragraph{Translating function calls}  The translation of function calls for recursive problems and for problems that require writing a
function that uses other sub-functions is parmeterized by three
options: 1) use the student's implementation of sub-functions, 2) use
the teacher's implementation of sub-functions, and 3) treat the
sub-functions as uninterpreted functions.

\paragraph{Generating the driver functions} The \Sk{} synthesizer supports checking equivalence of functions whose input arguments and return values are over \Sk{} primitive types such as \t{int}, \t{bit} and arrays. Therefore, after the translation of $\implp$ programs to \Sk{} programs, we need additional driver functions to 
integrate functions using \t{MultiType} input arguments and return
value to corresponding functions over \Sk{} primitive types. The
driver functions first converts the input arguments over primitive
types to corresponding \t{MultiType} variables using library functions
such as \t{computeMTFromInt}, and then calls the translated $\implp$
function with the \t{MultiType} variables. The returned \t{MultiType}
value is translated back to primitive types using library functions
such as \t{computeIntFromMT}. The driver function for student's
programs also consists of additional statements of the form
\t{if}($\choicek$) \t{totalCost++;} and the statement
\t{minimize(totalCost)}, which tells the synthesizer to compute a
solution to the Boolean variables $\choicek$ that minimizes the
\t{totalCost} variable.

\subsection{Incremental Solving for the Minimize hole expressions}
\seclabel{alg-increment-solving}
\begin{algorithm}
\caption{\t{CEGISMIN} Algorithm for Minimize expression}
\label{cegismin}
\begin{algorithmic}[1]
\State $\inpstate_0 \gets \inpstate_\t{random}, ~~~~i \gets 0, ~~~~\cnstrset_0 \gets \cnstrset, ~~~~\cnstrprev \gets \t{null}$
\While{\t{True}}
\State $i \gets i+1$
\State $\cnstrset_i \gets \t{Synth}(\inpstate_{i-1}, \cnstrset_{i-1})$ \Comment{Synthesis Phase}
\If{$\cnstrset_i = \t{UNSAT}$}
\Comment {Synthesis Fails}
\If{$\cnstrset_{\t{prev}} = \t{null}$} \Return \t{UNSAT\_SKETCH}
\Else ~~\Return PE(P,$\cnstrprev$)
\EndIf
\EndIf
\State $\textbf{choose} ~ \cnstr \in \cnstrset_i$
\State $\inpstate_i \gets \t{Verify}(\cnstr)$ \Comment{Verification Phase}
\If{$\inpstate_i = \t{null}$} \Comment {Verification Succeeds}
\State $(\t{minHole, minHoleValue}) \gets \t{getMinHoleValue}(\cnstr)$
\State $\cnstrprev \gets \cnstr$
\State $\cnstrset_i \gets \cnstrset_i \cup \{\t{encode}(\t{minHole} < \t{minHoleVal})\}$
\EndIf
\EndWhile
\hspace*{-19mm}
\end{algorithmic}
\end{algorithm}
We extend the \t{CEGIS} algorithm in \Sk{}~\cite{sketchthesis} to get
the \t{CEGISMIN} algorithm shown in Algorithm~\ref{cegismin} for
efficiently solving sketches that include a \t{minimize} hole
expression. The input state of the sketch program is denoted by
$\inpstate$ whereas the sketch constraint store denoted by
$\cnstrset$. Initially, the input state $\inpstate_0$ is assigned a
random input state value and the constraint store $\cnstrset_0$ is
assigned the constraint set obtained from the sketch program. The
variable $\cnstrprev$ stores the previous satisfiable hole values and
is initialized to \t{null}. In each iteration of the loop, the
synthesizer first performs the inductive synthesis phase where it
shrinks the constraints set $\cnstrset_{i-1}$ to $\cnstrset_{i}$ by
removing behaviors from $\cnstrset_{i-1}$ that do not conform to the
input state $\inpstate_{i-1}$. If the constraint set becomes
unsatisfiable, it either returns the sketch completed with hole values
from the previous solution if one exists, otherwise it returns
\t{UNSAT}. On the other hand, if the constraint set is satisfiable,
then it first chooses a conforming assignment to the hole values and
goes into the verification phase where it tries to verify the
completed sketch. If the verifier fails, it returns a
counter-example input state $\inpstate_i$ and the
synthesis-verification loop is repeated. If the verification phase
succeeds, instead of returning the result as is done in the \t{CEGIS}
algorithm, the \t{CEGISMIN} algorithm computes the value of
\t{minHole} from the constraint set $\cnstr$, stores the current
satisfiable hole solution $\cnstr$ in $\cnstrprev$, and adds an
additional constraint \{\t{minHole}<\t{minHoleVal}\} to the constraint
set $\cnstrset_i$. The synthesis-verification loop is then repeated
with this additional constraint to find a conforming value for the
\t{minHole} variable that is smaller than the current value in
$\cnstr$.

\subsection{Mapping \Sk{} solution to generate feedback}

Each correction rule in the error model $\fm$ is associated with a
feedback message, e.g. the integer variable initialization correction
rule $\var = \intN \rightarrow \var = \{\intN+1\}$ in the
\t{computeDeriv} error model is associated with the message
``Increment the right hand side of the initialization by $1$''. After
the \Sk{} synthesizer finds a solution to the constraints, the tool
maps back the values of unknown integer holes to their corresponding
expression choices. These expression choices are then mapped to
natural language feedback using the messages associated with the
corresponding correction rules, together with the line numbers.

\section{Implementation and Experiments}

We now briefly describe some of the implementation details of the tool,
and then describe the experiments we performed with it.

\subsection{Implementation and Features}
The tool's frontend that converts a python program to a \Sk{} program
is implemented in python itself and uses the python \t{ast} module for
parsing and rewriting ASTs. The backend system that solves the \Sk{}
program and provides feedback is implemented as a wrapper over the
\Sk{} system extended with the \t{CEGISMIN} algorithm. Error models in
our tool are currently written in python in terms of the python
AST. The tool also provides a mechanism to assign different cost
measure to correction rules in the error model to account for
different levels of mistakes.

\subsection{Benchmarks}

We created our benchmark set with problems taken from the Introduction
to Programming course at MIT (6.00) and the EdX version of the class
(6.00x). The \t{prodBySum} problem asks to compute the product using
the sum operator, the \t{oddTuples} problem asks to compute a list
consisting of alternate elements of the input list, the \t{evalPoly}
problem asks to compute the value of a polynomial on a given value,
the \t{iterPower} (and \t{recurPower}) problem asks to compute the
exponentiation using multiplication and the \t{iterGCD} problem
computes the gcd of two numbers. We also created a few AP-level
loop-over-arrays and dynamic programming
problems\footnote{\url{http://pexforfun.com/learnbeginningprogramming}}
on \pexfun to show the scalability and applicability of our technique
to other languages such as \csharp.

\subsection{Experiments}

\paragraph{Performance}
\tabref{results} shows the number of student attempts corrected for
each benchmark problem as well as the time taken by the tool to
provide the feedback. The experiments were performed on a 2.4GHz Intel
Xeon CPU with 16 cores and 16GB RAM. The experiments were performed
with bounds of 4 bits for input integer values and maximum length 4
for input lists. For each benchmark problem, we first removed the
student attempts with syntax errors to get the \t{Test Set} on which
we ran our tool. We then separated the attempts which were correct to
measure the effectiveness of the tool on the incorrect attempts. The
tool was able to provide appropriate corrections as feedback for
$65\%$ of all incorrect student attempts in around $10$ seconds on
average. The remaining 35\% of incorrect student attempts on which the
tool could not provide feedback fall in one of the following
categories:
\begin{itemize}
\item{\bf Completely incorrect solutions:} We observed many student
  attempts that were empty or performing trivial computations such as
  printing variables.
\item{\bf Big conceptual errors:} A common error we found in the case of
  \t{eval-poly-6.00x} was that a large fraction of incorrect attempts
  (260/541) were using the list function \t{index} to get the index of
  a value in the list, whereas the \t{index} function returns the
  index of first occurrence of the value in the list. Some other
  similar mistakes involve introducing or moving program statements
  from one place to another. These mistakes can not be corrected with
  the application of a set of local correction rules.
\item{\bf Unimplemented features:} Our implementation currently lacks a
  few of the complex python features such as pattern matching on list
  \t{enumerate} function.
\item{\bf Timeout:} In our experiments, we found less than $1\%$ of the
  student attempts timed out (set as 2 minutes).
\end{itemize}

\begin{figure*}[!htpb]
\begin{tabular}{ccc}
  \includegraphics[scale=0.25]{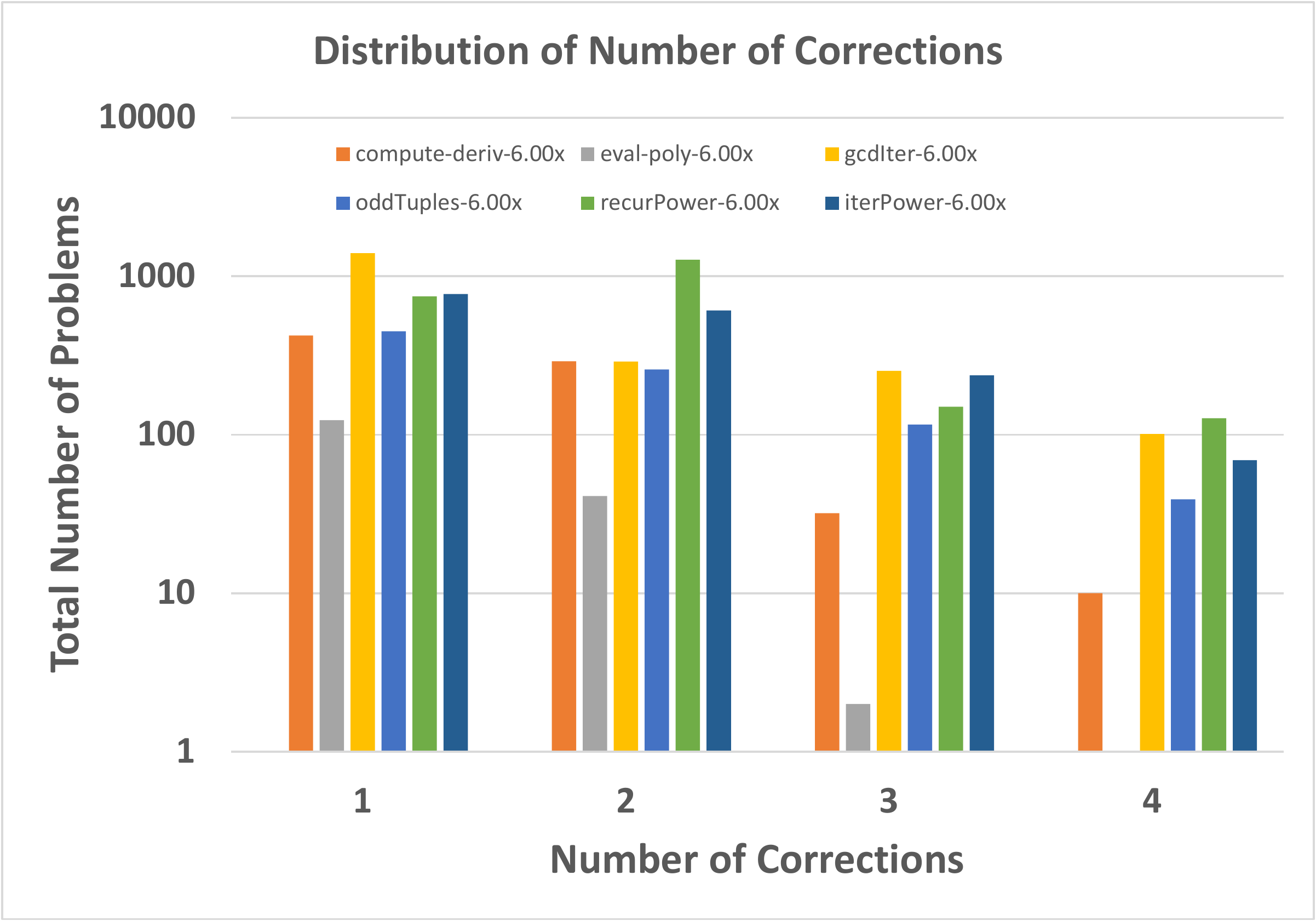}
&
\includegraphics[scale=0.24]{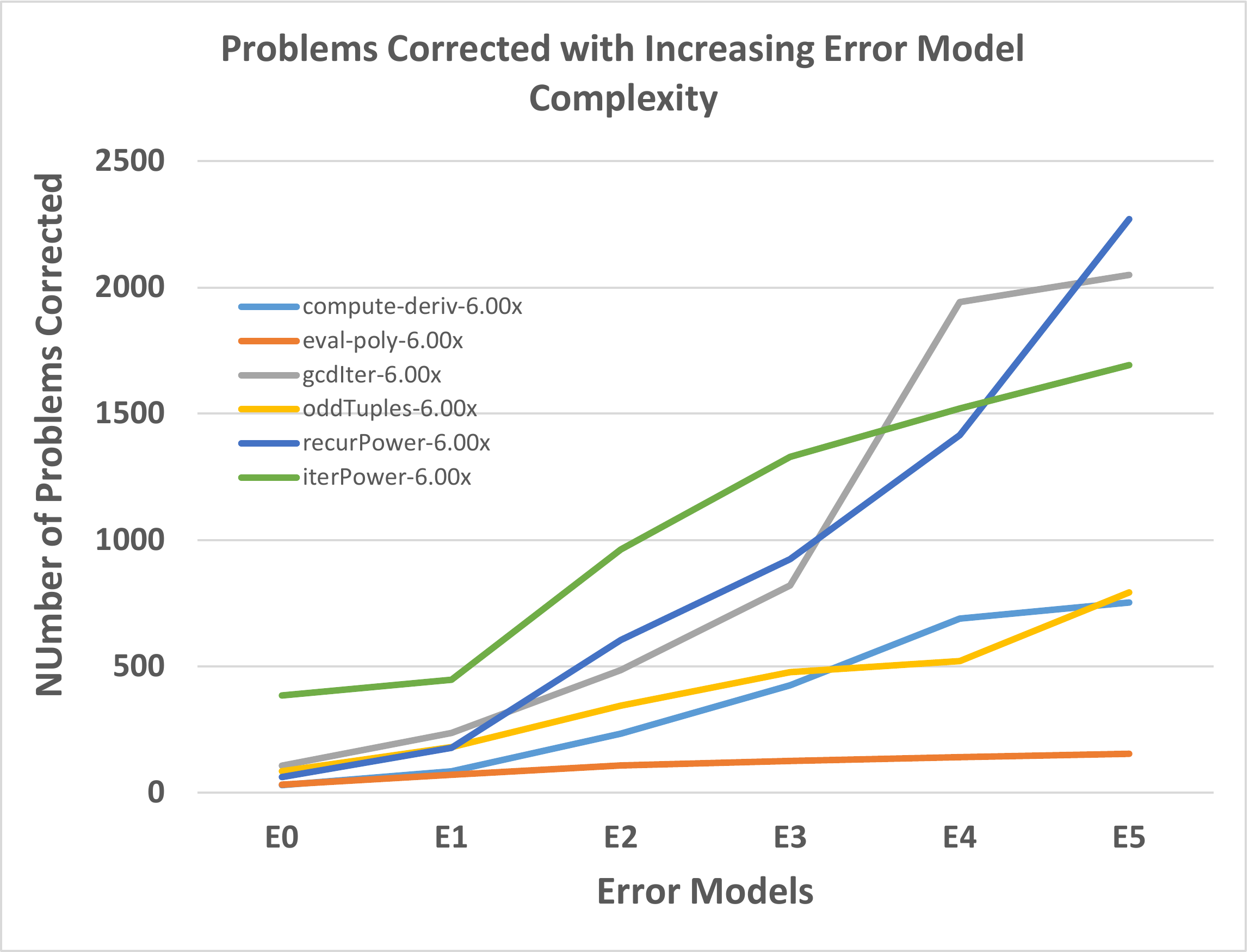}
&
\includegraphics[scale=0.28]{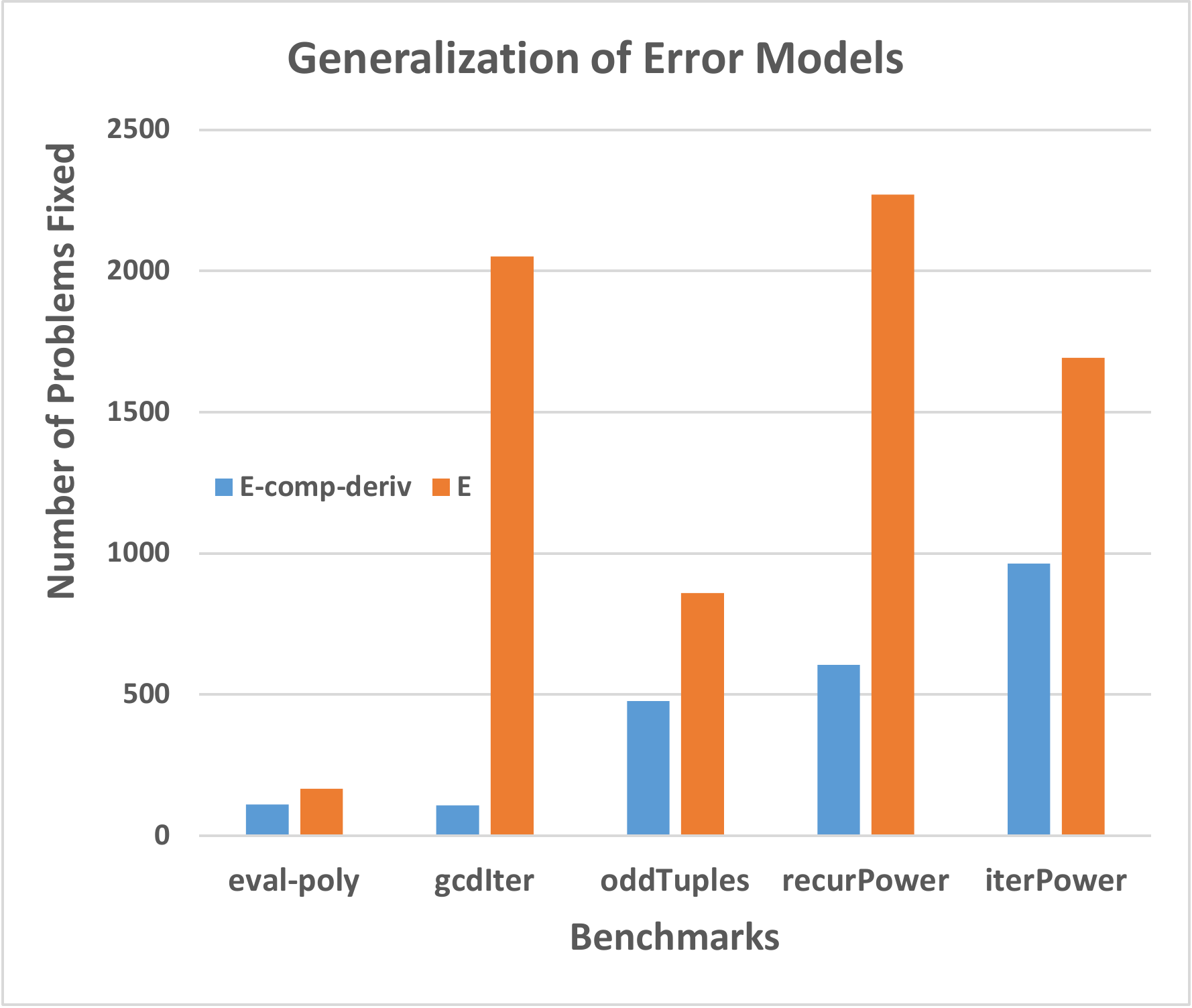}\\
(a) & (b) & (c)\\
\end{tabular}
\caption{(a) The number of incorrect problem that require different number of corrections, (b) the number of problems corrected with adding rules to the error models, and (c) the performance of compute-deriv error model on other problems.}
\figlabel{graphs}
\end{figure*}

\begin{table*}
\begin{center}
\begin{tabular}{|c|c|c|c|c|c|r|c|c|}
\hline
\multirow{2}{*}{\bf Benchmark} & {\bf Total} & {\bf Syntax} & \multirow{2}{*}{\bf Test Set} & \multirow{2}{*}{\bf Correct} &{\bf Incorrect} & {\bf Generated} & {\bf Average} & {\bf Median}\\
& {\bf Attempts} & {\bf Errors} & {} & {} &{\bf Attempts} & {\bf Feedback} & {\bf Time(in s)} & {\bf Time(in s)}\\
\hline
\t{prodBySum-6.00} & 1056 & 16 & 1040 & 772 & 268 & 218 (81.3\%)& 2.49s & 2.53s \\
\hline
\t{oddTuples-6.00} & 2386 & 1040 & 1346 & 1002 & 344 & 185 (53.8\%) &  2.65s & 2.54s\\
\hline
\t{computeDeriv-6.00} & 144& 20& 124& 21 & 103 & 88 (85.4\%)& 12.95s & 4.9s\\
\hline
\t{evalPoly-6.00} & 144 & 23 & 121 & 108 & 13 & 6 (46.1\%) & 3.35s & 3.01s\\
\hline
\t{computeDeriv-6.00x} & 4146 & 1134 & 3012 & 2094 & 918 & 753 (82.1\%)  & 12.42s & 6.32s\\
\hline
\t{evalPoly-6.00x} & 4698 & 1004 & 3694 & 3153 & 541 & 167 (30.9\%) & 4.78s & 4.19s\\
\hline
\t{oddTuples-6.00x} & 10985 & 5047 & 5938 &4182& 1756 & 860 (48.9\%)& 4.14s & 3.77s\\
\hline
\t{iterPower-6.00x} & 8982 & 3792 & 5190 & 2315& 2875 & 1693 (58.9\%) &3.58s & 3.46s\\
\hline
\t{recurPower-6.00x} & 8879 & 3395 & 5484 & 2546 & 2938 & 2271 (77.3\%) & 10.59s & 5.88s\\
\hline
\t{iterGCD-6.00x} & 6934 & 3732 & 3202 & 214  & 2988 & 2052 (68.7\%) & 17.13s & 9.52s\\
\hline
\hline
\t{stock-market-I(\csharp)} & 52 & 11 & 41 & 19 & 22 & 16 (72.3\%)& 7.54s & 5.23s\\
\hline
\t{stock-market-II(\csharp)} & 51 & 8 &  43 & 19 & 24 & 14 (58.3\%) & 11.16s & 10.28s\\
\hline
\t{restaurant rush (\csharp)} & 124 & 38 & 86 & 20 &  66 & 41 (62.1\%)& 8.78s & 8.19s\\
\hline
\end{tabular}
\end{center}
\caption{The percentage of student attempts corrected and the time taken for correction for the benchmark problems.}
\tablabel{results}
\end{table*}

\paragraph{Number of Corrections}
The number of student attempts that require different number of
corrections are shown in \figref{graphs}(a). We observe from the
figure that a large fraction of the problems require 3 and 4
coordinated corrections, and to provide feedback on such attempts, we
need a technology like ours that can symbolically encode the outcome
of different corrections on all input values.

\paragraph{Repetitive Mistakes}

In this experiment, we validate our hypothesis that students make
similar mistakes while solving a given problem. The graph in
\figref{graphs}(b) shows the number of student attempts corrected as
more rules are added to the error model of the benchmark problems. As
can be seen in the figure, adding a single rule to the error model can
lead to correction of hundreds of attempts.

\paragraph{Generalization of Error Models}

In this experiment, we check the hypothesis that whether the
correction rules generalize across problems of similar kind. The
result of running the \t{compute-deriv} error model on other benchmark
problems is shown in figref{graphs}(c). As expected, it does not
perform as well as the problem-specific error models, but it does
suffice to fix a fraction of the incorrect attempts and also provides
a good starting point to add more problem-specific rules.

\section{Related Work}
\seclabel{related}

In this section, we describe several related work to our technique
from the areas of automated programming tutors, automated program
repair, fault localization, automated debugging and program synthesis.

\subsection{AI based programming tutors}
There has been a lot of work done in the AI community for building
automated tutors for helping novice programmers learn programming by
providing feedback about semantic errors. These tutoring systems can
be categorized into the following two major classes:

{\bf Code-based matching approaches:} LAURA~\cite{laura} converts
teacher's and student's program into a graph based representation and
compares them heuristically by applying program transformations while
reporting mismatches as potential bugs. TALUS~\cite{talus} matches a
student's attempt with a collection of teacher's algorithms. It first
tries to recognize the algorithm used and then tentatively replaces
the top-level expressions in the student's attempt with the recognized
algorithm for generating correction feedback. The problem with these
approach is that the enumeration of all possible algorithms (with its
variants) for covering all corrections is very large and tedious on
part of the teacher.

{\bf Intention-based matching approaches:} LISP tutor~\cite{lisptutor}
creates a model of the student goals and updates it dynamically as the
student makes edits. The drawback of this approach is that it forces
students to write code in a certain pre-defined structure and limits
their freedom.  MENO-II~\cite{meno} parses student programs into a
deep syntax tree whose nodes are annotated with plan tags. This
annotated tree is then matched with the plans obtained from teacher's
solution. PROUST~\cite{proust}, on the other hand, uses a knowledge
base of goals and their corresponding plans for implementing them for
each programming problem. It first tries to find correspondence of
these plans in the student's code and then performs matching to find
discrepancies. CHIRON~\cite{chiron} is its improved version in which
the goals and plans in the knowledge base are organized in a
hierarchical manner based on their generality and uses machine
learning techniques for plan identification in the student code. These
approaches require teacher to provide all possible plans a student can
use to solve the goals of a given problem and do not perform well if
the student's attempt uses a plan not present in the knowledge base.

Our approach performs semantic equivalence of student's attempt and
teacher's solution based on exhaustive bounded symbolic verification
techniques and makes no assumptions on the algorithms or plans that
students can use for solving the problem. Moreover, our approach is
modular with respect to error models; the local correction rules are
provided in a declarative manner and their complex interactions are
handled by the solver itself.

\subsection{Automated Program Repair}

K$\ddot{o}$nighofer et. al.~\cite{prfmcad11} present an approach for
automated error localization and correction of imperative
programs. They use model-based diagnosis to localize components that
need to be replaced and then use a template-based approach for
providing corrections using SMT reasoning. Their fault model only
considers the right hand side (RHS) of assignment statements as
replaceable components. The approaches in~\cite{JobstmannGB05,staber}
frame the problem of program repair as a game between an environment
that provides the inputs and a system that provides correct values for
the buggy expressions such that the specification is satisfied. These
approaches only support simple corrections (e.g. correcting RHS side
of expressions) in the fault model as they aim to repair large
programs with arbitrary errors. In our setting, we exploit the fact
that we have access to the dataset of previous student mistakes that
we can use to construct a \emph{concise and precise} error model. This
enables us to model more sophisticated transformations such as
introducing new program statements, replacing LHS of assignments
etc. in our error model. Our approach also supports minimal cost
changes to student's programs where each error in the model is
associated with a certain cost, unlike the earlier mentioned
approaches.

Mutation-based program repair~\cite{mutationicst} performs mutations
repeatedly to statements in a buggy program in order of their
suspiciousness until the program becomes correct. The large state
space of mutants ($10^{12}$) makes this approach infeasible. Our
approach uses a symbolic search for exploring correct solutions over
this large set. There are also some genetic programming
approaches that exploit redundancy present in other parts of the
code for fixing faults~\cite{arcuri,weimer}. These techniques are not
applicable in our setting as such redundancy is not present in
introductory programming problems.

\subsection{Automated Debugging and Fault localization}

Techniques like Delta Debugging ~\cite{deltadebugging} and
QuickXplain~\cite{quickexplain} aim to simplify a failing test case
to a minimal test case that still exhibits the same failure. Our
approach can be complemented with these techniques to restrict the
application of rewrite rules to certain failing parts of the program
only. There are many algorithms for fault localization~\cite{fl1,fl2}
that use the difference between faulty and successful executions of
the system to identify potential faulty locations. Jose
et. al.~\cite{rupakpldi11} recently suggested an approach that uses a
MAX-SAT solver to satisfy maximum number of clauses in a formula
obtained from a failing test case to compute potential error
locations. These approaches, however, only localize faults for a
single failing test case and the suggested error location might not be
the desired error location, since we are looking for common error
locations that cause failure of multiple test cases. Moreover, these
techniques provide only a limited set of suggestions (if any) for
repairing these faults.

\subsection{Program Synthesis}

Program synthesis has been used recently for many applications such as
synthesis of efficient low-level code~\cite{solar2005bitstreaming,
  completefunctionalsynthesis}, inference of efficient synchronization
in concurrent programs~\cite{ibmAGS}, snippets of excel
macros~\cite{cacm12}, relational data
structures~\cite{hawkins11,hawkins12} and angelic
programming~\cite{angelicpopl}. The
\Sk~tool~\cite{solar2005bitstreaming,sketchthesis} takes a partial
program and a reference implementation as input and uses
constraint-based reasoning to synthesize a complete program that is
equivalent to the reference implementation. In general cases, the
template of the desired program as well as the reference specification
is unknown and puts an additional burden on the users to provide them;
in our case we use the student's solution as the template program and
teacher's solution as the reference implementation. A recent work by
Gulwani et. al.~\cite{sumitgeometry} also uses program synthesis
techniques for automatically synthesizing solutions to ruler/compass
based geometry construction problems. Their focus is primarily on
finding a solution to a given geometry problem whereas we aim to
provide feedback on a given programming exercise solution. \techrep{A
comprehensive survey on various program synthesis techniques and
different intention mechanisms for specification can be found
in~\cite{sumitsurvey}.}

\section{Conclusions}
\seclabel{conclusion} In this paper, we presented a new technique of
automatically providing feedback for introductory programming
assignments that can complement manual and test-cases based
techniques. The technique uses an error model describing the potential
corrections and constraint-based synthesis to compute minimal
corrections to student's incorrect solutions. We have evaluated our
technique on a large set of benchmarks and it can correct $65\%$ of
incorrect solutions. We believe this technique can provide a basis for
providing automated feedback to hundreds of thousands of students
learning from online introductory programming courses that are being
taught by MITx and Udacity.

\bibliographystyle{abbrvnat}
\bibliography{references}

\end{document}